\documentclass[12pt,draftclsnofoot,onecolumn]{IEEEtran}
\setlength{\columnsep}{0.3in}
\usepackage{bm}
\usepackage{amsmath}
\usepackage{booktabs}
\usepackage{amssymb}
\usepackage{amsthm}
\usepackage{graphicx}
\usepackage{array}
\usepackage{nameref}
\usepackage{lipsum}

\usepackage{multirow}
\newtheorem{theorem}{Theorem}

\newtheorem{lemma}[theorem]{Lemma}
\newtheorem{auxiliary code}{Auxiliary Code}

\usepackage{subcaption}

\usepackage{authblk}
\usepackage[T1]{fontenc}
\usepackage{stfloats}
\usepackage{url}

\usepackage{color}
\usepackage{nccmath}
\usepackage[ruled]{algorithm2e}
\begin{document}
\title{Channel Models and Coding Solutions for 1S1R Crossbar Resistive Memory with High Line Resistance}

\author{Zehui~Chen,~\IEEEmembership{Student Member,~IEEE,}
	and~Lara~Dolecek,~\IEEEmembership{Senior Member,~IEEE}
	\thanks{Z. Chen, and L. Dolecek are with the Electrical and Computer Engineering Department, University of California, Los Angeles, Los Angeles, CA 90095, USA. (email: chen1046@ucla.edu; dolecek@ee.ucla.edu). 
	This research is supported in part by a grant from UC MEXUS and an NSF-BSF grant no.1718389. Partial results in this paper were presented at the IEEE Global Communication Conference, Taipei, China, Dec. 2020 (reference \cite{chen2020}).}}
\maketitle

\begin{abstract}
	Crossbar resistive memory with the 1 Selector 1 Resistor (1S1R) structure is attractive for nonvolatile, high-density, and low-latency storage-class memory applications. As technology scales down to the single-nm regime, the increasing resistivity of wordline/bitline becomes a limiting factor to device reliability. This paper presents write/read communication channels while considering the line resistance and device variabilities by statistically relating the degraded write/read margins and the channel parameters. Binary asymmetric channel (BAC) models are proposed for the write/read operations. Simulations based on these models suggest that the bit-error rate of devices are highly non-uniform across the memory array. These models provide quantitative tools for evaluating the trade-offs between memory reliability and design parameters, such as array size, technology nodes, and aspect ratio, and also for designing coding-theoretic solutions that would be most effective for crossbar memory. Method for optimizing the read threshold is proposed to reduce the raw bit-error rate (RBER). We propose two schemes for efficient channel coding based on Bose-Chaudhuri-Hocquenghem (BCH) codes. An interleaved coding scheme is proposed to mitigate the non-uniformity of reliability and a location dependent coding framework is proposed to leverage this non-uniformity.  Both of our proposed coding schemes effectively reduce the undetected bit-error rate (UBER).  
\end{abstract}

\section{Introduction}
The crossbar resistive memory, whereby bistable memristors are placed at the crosspoint of wordlines and bitlines, is one promising candidate for the next generation nonvolatile memory due to its inherent $4F^2$ device density and its simple crossbar structure \cite{ielmini2015resistive}. One of its most appealing applications is storage-class memory (SCM), a term that refers to memory technology that fills the latency and density gaps between DRAM and NAND flash memory \cite{burr2008overview}. Meanwhile, as semiconductor technology scales down to single-digit-nm, simultaneously scaled wordline/bitline resistances increasingly become a limiting factor to device reliability and hence memory scalability \cite{liang2013effect}. 

Previous literature has extensively shown that even moderate line resistance significantly degrades the reliability of the write and read operations. The degradation of the write/read margins due to high line resistance for the wort-case memory cell, i.e., the cell that is furthest from the source and ground, are studied in \cite{liang2013effect,chen2016design,kim2015numerical}. The adverse effect of the line resistance on the write/read margins for cells across the memory array are studied in \cite{chen2013comprehensive,shin2010data} by solving a system of Kirchhoff's current law (KCL) equations. While these studies focused on the degradation of the write/read margins, it remains unclear how the degraded write/read margins affect the system level reliability metric, e.g., the bit-error rate (BER). In other words, channel models are not yet well-established for this problem.

It is demonstrated in \cite{chen2013comprehensive} that, when considering the line resistance in resistive memory, the write margins are non-uniform across the array, which leads to non-uniform reliability levels in the memory array. Designing error correction codes (ECCs) for the worst-case often leads to overly conservative code design and is therefore not rate efficient. For example, in \cite{zorgui2019polar}, the authors designed a non-stationary polar code targeting channels with different reliability levels, which are characterized empirically by simulations. Moreover, \cite{zorgui2019polar} also showed that using more precise channel modeling, i.e., using the binary asymmetric channel (BAC) instead of the binary symmetric channel (BSC), provides an order of magnitude improvement in BER, which proves the necessity of precise channel models. 

In this work, we propose BAC models for writing to and reading from memory devices in crossbar memory, parameterized by device parameters, array size, wordline/bitline resistances, and device location by statistically relating the degraded write/read margins of cells at different locations to the channel parameters. Our analytical channel models, which take into account the device location, provide quantitative tools for analyzing the aforementioned non-uniformity and the trade-off between device parameters and memory reliability. These models are therefore beneficial for system engineers when designing the next generation storage systems. Previous studies on the write/read margins assume deterministic High Resistance State (HRS) and Low Resistance State (LRS) for the memory device whereas the HRS and LRS are nondeterministic in nature \cite{ji2015line,chen2011variability}. Our write/read channel models, which are derived probabilistically, allow us to take the resistance variability of LRS and HRS into consideration for more precise modeling. 

Building upon our proposed channel models and the observation of non-uniform reliability, we propose methods to reduce the raw bit-error rate (RBER) and undetected bit-error rate (UBER) using techniques from estimation theory and channel coding theory. For the read channel, we propose an efficient procedure to compute an optimal read threshold. We show that the RBER of the read channel is reduced by a large factor using the optimal read threshold. Efficient ECC solution for a crossbar resistive memory targeting the SCM application must mitigate and/or leverage the non-uniformity of reliability while being compatible with the low-latency requirement of SCM. Based on BCH codes, we propose a scheme based on interleaving and a scheme utilizing multiple BCH codes with different error correction capabilities for improved UBER performance. For the latter scheme, we propose a systematic framework for allocating different codes based on the location dependent channel parameters. Both of our proposed coding schemes effectively reduce the UBER.

The content of this paper is organized as follows. Section II provides background on crossbar resistive memory and the write/read operation. The circuit models and the variabilities are also discussed in Section II. Section III presents the channel characterization for the write and read operations. Simulation results on the proposed channel models with various parameters are also presented in Section III. Section IV presents our study on the optimal read threshold and its simulation results. Section V discusses two schemes for efficient channel coding targeting SCM applications. We conclude and discuss future research in Section VI.

\section{Preliminaries}
\subsection{1S1R Crossbar resistive memory Background and Model}
In crossbar resistive memory array, the logical state $0$ or $1$ is represented by the HRS or LRS of a memory cell, respectively. For a bipolar memristor, the state of a cell is switched from LRS to HRS (Reset Operation) or from HRS to LRS (Set Operation) by applying a positive or negative voltage across the memory cell, respectively. For the write operation, we consider the so called ``V/2'' write scheme (cf. \cite{chen2016design}) as it is usually more energy-efficient than the so called ``V/3'' write scheme. In particular, when writing to a selected cell, the wordline and bitline of the selected cell are biased at the write voltage ($V_{w\_set}$ or $V_{w\_reset}$) and 0, respectively, while other wordlines and bitlines are biased at half of the write voltage to prevent unintentional write, as shown in Fig. \ref{fig:write model}. 

\begin{figure}[h]
	\centering
	\begin{subfigure}{0.49\textwidth}
	\centering
	\includegraphics[scale=0.43]{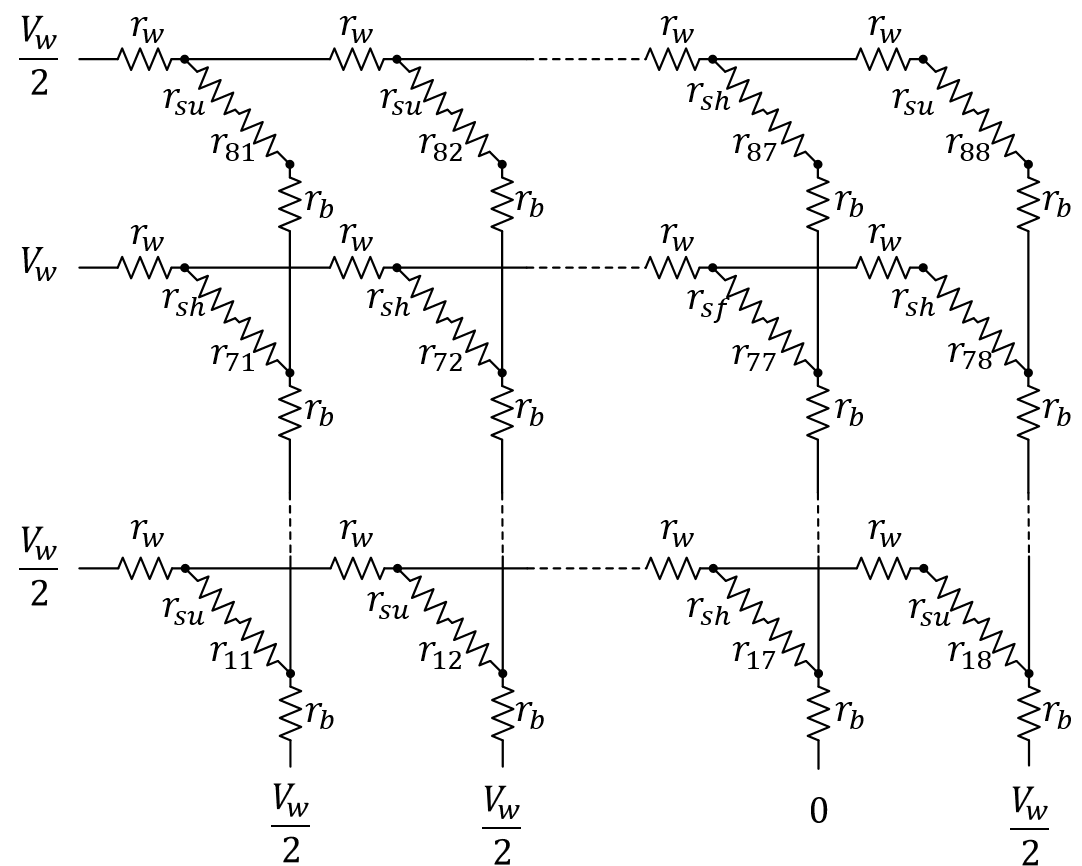}	
	\caption{Circuit model for writing to a $8\times8$ array.}
	\label{fig:write model}
	\end{subfigure}
	\begin{subfigure}{0.49\textwidth}
	\centering
	\includegraphics[scale=0.445]{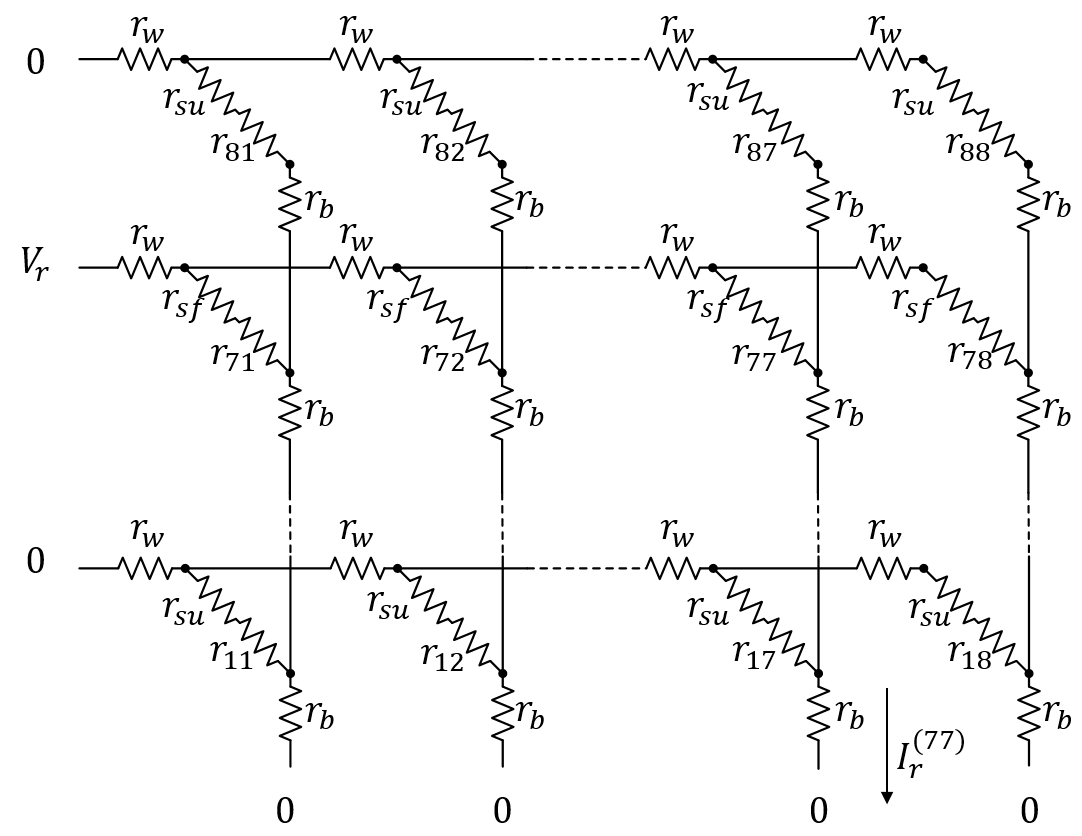}	
	\caption{Circuit model for reading to a $8\times8$ array.}
	\label{fig:read model}
	\end{subfigure}
	\caption{Examples of Circuit Models ($V_w$ denotes $V_{w\_set}$ or $V_{w\_reset}$).}
	\vspace{-1em}
\end{figure}

For the read operation, we consider the current-mode sensing scheme as it has a smaller latency compared with the voltage-mode sensing scheme \cite{chen2016design}. When reading a selected cell, a read voltage ($V_r$) is applied on its wordline and all other wordlines and bitlines are grounded. A current is sensed by the sensing amplifier located at the end of its bitline, and is used to determine the state of the selected cell, as shown in Fig. \ref{fig:read model}.

In this paper, we focus on crossbar resistive memory with the widely used 1 selector 1 resistor (1S1R) structure, where highly nonlinear selectors are connected in series with the memristors to prevent write and read disturbs. For both write and read operations, when the voltage across a selector is close to the applied voltage, we say that this selector is fully selected and we assume it has resistance $r_{sf}$; when the voltage across a selector is close to 0, we say that this selector is un-selected and we assume it has resistance $r_{su}$. For the write operation, since other cells on the wordline and bitline of the selected cell have voltage close to half of the write voltage across them, we say that the selectors for those cells are half-selected and we assume they have resistance $r_{sh}$. In general, $r_{sf}<<r_{sh}<r_{rs}$. An ideal selector has  parameters $r_{sf} = 0$ and $r_{sh}=r_{su}=\infty$. Our proposed model is a general one that does not have the ideal selector assumption. Meanwhile, since the main focus of this work is the adverse effect of line resistance, we use the ideal selector assumption to provide mathematical insights in III.B and to simplify our simulations in III.D and IV.C. Throughout this work, we assume that the interconnect resistances of wordlines and bitlines are constant across the array, and they are denoted by $r_w$ and $r_b$ respectively.
\subsection{Memristor Variabilities and Models}
In this paper, we consider two variabilities of memristor, the non-deterministic write operation and the non-deterministic resistance value for each resistance state.  It is widely observed that the switching operations of memristor are stochastic and follow log-normal switching time distributions, with distribution parameters depend on the applied voltage \cite{medeiros2011lognormal,niu2012low}. Our models for the switching time distributions are adopted from \cite{medeiros2011lognormal} and more details are provided in Section III. 

Previous works (cf. \cite{liang2013effect} - \cite{shin2010data}) on the degradation of write and read margins due to high line resistance assume deterministic resistance states, e.g., HRS resistance is $10000\Omega$ and LRS resistance is $100\Omega$. Meanwhile, due to both device-to-device variation and cycle-to-cycle variation, the resistance of each state is highly non-deterministic \cite{ji2015line,chen2011variability}. To incorporate this variability into our reliability analysis, we use random variables to represent the resistance of the memory cells. Based on observations in \cite{ji2015line,chen2011variability}, we assume they are i.i.d. and their conditional distributions, conditioned on their states, follow log-normal distributions.  For example, let i.i.d. Bernoulli($q$) random variable $S_{ij}$ denote the state of cell $(i,j)$, with $S_{ij} = 1$ for LRS and $S_{ij} = 0$ for HRS. Let $R_{ij}$ be the associated random variable denoting the resistance of cell $(i,j)$. Then our model assumes:
\[\ln(R_{ij}|S_{ij}=1)\sim \mathcal{N}(\mu_{L},\sigma_{L}^2),\]
and
\[\ln(R_{ij}|S_{ij}=0)\sim \mathcal{N}(\mu_{H},\sigma_{H}^2).\]

\section{Channel Models}
Our proposed channel models are depicted in Fig. \ref{fig:cascaded_channel}. We model the write, read and cascaded channels as binary asymmetric channels (BACs). We discuss in the following subsections how the channel parameters are related to device parameters, line resistance, and device location. 
\begin{figure}[h]
	\centering
	\includegraphics[scale=0.37]{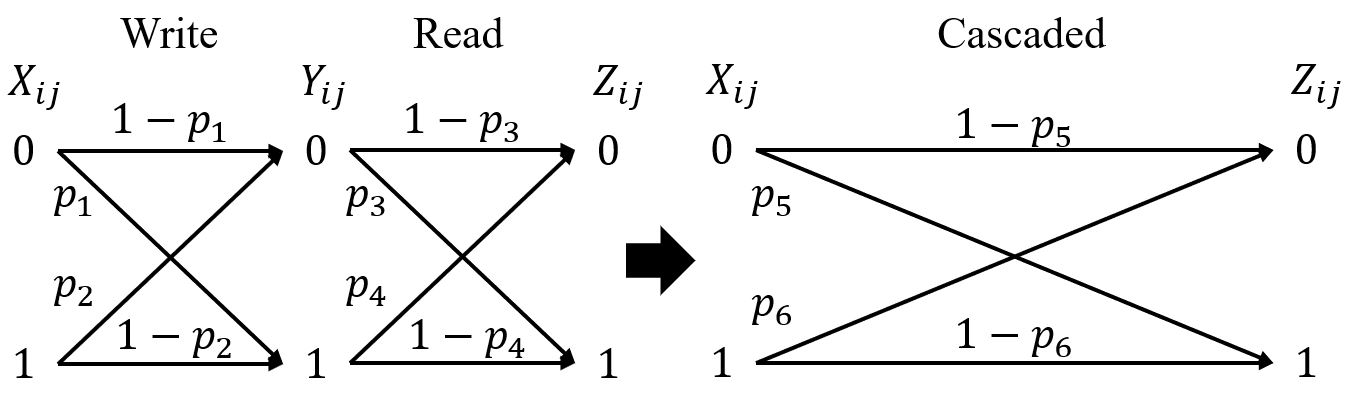}	
	\caption{Proposed Channel Models.}
	\label{fig:cascaded_channel}
	\vspace{-1em}
\end{figure}

We denote the state we want to write to cell $(i,j)$ by $X_{ij}\in\{0,1\}$, the state actually written by $Y_{ij}\in\{0,1\}$ and the detected (for read operation) state by $Z_{ij}\in\{0,1\}$. Note that even though the information is stored as the resistance of a cell, we choose to use binarized state variable $Y_{ij}$ because firstly it {enables} us to utilize a well-known result in the literature \cite{medeiros2011lognormal} that characterizes the switching of a device, and secondly it allows mathematical tractability and the separation of the write/read channels. Information about the resistance of a cell is instead embedded in the resistance distribution. Also note that with a binarized state for a cell, multiple write/read operations are also independent. Therefore, the cascaded channel model is still valid if one writes to and reads from a cell multiple times.
\subsection{Write Channel}
In this section, we derive the write channel. We note that the write operation is affected by the previous state of cell $(i,j)$. We let this state be denoted by $S^*_{ij}$ and the associated resistance value be $R^*_{ij}$. We assume that when the previous state is the same as the state we want to write, the write operation is always successful, i.e., 
$P(Y_{ij}=1|X_{ij}=1,S^*_{ij}=1)=1,$
and $P(Y_{ij}=0|X_{ij}=0,S^*_{ij}=0)=1.$

When the previous state is not the same as the state we want to write, a sufficient write voltage and a sufficient write time {are} required to change the state of the cell. Due to high line resistances, the effective write voltage on a cell could be much smaller than the desired write voltage, i.e., the write margin is decreased. We denote the effective write voltage on a cell $(i,j)$ as $\tilde{V}_w(r^*_{ij},i,j)$ where $r^*_{ij}$ is a realization of $R^*_{ij}$. With a method similar to the one described in \cite{chen2013comprehensive}, $\tilde{V}_w(r^*_{ij},i,j)$ can be obtained by solving a system of KCL equations using the circuit model described in Section II.A. We map the degraded write margin to the decreased write reliability by considering the log-normal switching time distribution, adopted from \cite{medeiros2011lognormal}. With fixed switching times $t_{set}$ and $t_{reset}$, the log-normal switching time distributions lead to the following:
\begin{equation}
\label{equ:set_success}
P(Y_{ij}=1|X_{ij}=1,S^*_{ij}=0,R^*_{ij}=r^*_{ij})=1-Q\left(\frac{\ln t_{set} -\ln(\tau^{(ij)}_{set})}{\sigma_{set}}\right),
\end{equation} 
and
\begin{equation}
\label{equ:reset_success}
P(Y_{ij}=0|X_{ij}=0,S^*_{ij}=1,R^*_{ij}=r^*_{ij})
=1-Q\left(\frac{\ln t_{reset} -\ln(\tau^{(ij)}_{reset})}{\sigma_{reset}}\right),
\end{equation} 
where $Q(\cdot)$ is the $Q$-function, i.e., $Q(x) = \frac{1}{\sqrt{2\pi}}\int_{x}^{\infty}\exp(-\frac{u^2}{2})du$. Parameters $\sigma^2_{set}$ and $\sigma^2_{reset}$ are the variance of the normal distributions associated with the set and reset switching time distribution, respectively, which are independent of $\tilde{V}_w(r^*_{ij},i,j)$, according to \cite{medeiros2011lognormal}. Parameters $\tau^{(ij)}_{set}$ and $\tau^{(ij)}_{reset}$ are the median of the set and reset switching time, respectively. Note that in the above equations, to be consistent with the existing literature \cite{medeiros2011lognormal,niu2012low}, we use the median parameterization of the log-normal distribution. According to the literature, the medians of the switching times ($\tau^{(ij)}_{set}$ and $\tau^{(ij)}_{reset}$ in $\mu s$) are exponentially dependent on the effective write voltage. We therefore parameterize the medians as following:
\[\ln\left(\tau^{(ij)}_{set}\right) = \alpha_{set}\tilde{V}_w(r^*_{ij},i,j)+\beta_{set},\]
and
\[\ln\left(\tau^{(ij)}_{reset}\right) = \alpha_{reset}\tilde{V}_w(r^*_{ij},i,j)+\beta_{reset}.\]

Using (\ref{equ:set_success}), (\ref{equ:reset_success}) and marginalizing over the conditionally log-normally distributed random variable $R^*_{ij}$, we get:
\small
\begin{equation}
\label{equ:set_fail}
P(Y_{ij}=0|X_{ij}=1,S^*_{ij}=0)=\int_{-\infty}^{\infty}\frac{1}{\sqrt{2\pi}r^*_{ij}\sigma_{H}} \exp\left[-\frac{\left(\ln r^*_{ij}-\mu_H\right)^2}{2\sigma_{H}^2}\right]Q\left(\frac{\ln t_{set} -\ln(\tau^{(ij)}_{set})}{\sigma_{set}}\right)dr^*_{ij},
\end{equation} 
and
\begin{equation}
\label{equ:reset_fail}
P(Y_{ij}=1|X_{ij}=0,S^*_{ij}=1)=\int_{-\infty}^{\infty}\frac{1}{\sqrt{2\pi}r^*_{ij}\sigma_{L}}\exp\left[-\frac{\left(\ln r^*_{ij}-\mu_L\right)^2}{2\sigma_{L}^2}\right]Q\left(\frac{\ln t_{reset} -\ln(\tau^{(ij)}_{reset})}{\sigma_{reset}}\right)dr^*_{ij}.
\end{equation} 
\normalsize
Putting (\ref{equ:set_fail}) and (\ref{equ:reset_fail}) together with the prior symbol probability $q=P(S^*_{ij}=0)$, we arrive at the write binary asymmetric channel, depicted in Fig. \ref{fig:cascaded_channel}, for the write operation with the following channel parameters:
\begin{equation}
\label{equ:p1}
p^{(ij)}_1 = (1-q)P(Y_{ij}=1|X_{ij}=0,S^*_{ij}=1),
\end{equation}
and
\begin{equation}
\label{equ:p2}
p^{(ij)}_2 = qP(Y_{ij}=0|X_{ij}=1,S^*_{ij}=0).
\end{equation}
Here and elsewhere, we use superscript $(ij)$ to highlight that the channel parameters are dependent on the cell location $(i,j)$. 

Through equations (\ref{equ:set_success}) - (\ref{equ:p2}), we are able to relate the write margin $\tilde{V}_w(r^*_{ij},i,j)$ to the BER of the write channel. For example, comparing the best-case cell to the worst-case cell in the example in Section III.D Fig. \ref{fig:heat_map}, we observe that the write margin for Reset is dropped from $4.9 V$ to $1.64 V$ while the write BER is increased from $3.35\times10^{-4}$ to $1.75\times10^{-2}$, thus providing further evidence that location dependent BER analysis matters.

\subsection{Read Channel}
When reading from the cell $(i.j)$, we consider the current-mode sensing scheme and a fixed threshold detector. Let $I^{(ij)}_r$ be the current sensed by the sensing amplifier, which can be also calculated by solving a system of KCL equations. $I^{(ij)}_r$ is hence dependent on the cell location, the resistance of the selected cell, and the resistances of unselected cells. Let $I_{th}$ be the threshold current. The threshold detector is as follows:
\begin{equation}
Z_{ij}=\begin{cases}
0, I^{(ij)}_r\leq I_{th},\\
1, I^{(ij)}_r>I_{th}.
\end{cases}
\end{equation}
With the threshold detector above, the decision error probabilities are:
\begin{equation}
\label{equ:0_read_error}
P(Z_{ij}=1|Y_{ij}=0) = P(I^{(ij)}_r> I_{th}|Y_{ij}=0);
\end{equation}

\begin{equation}
\label{equ:1_read_error}
P(Z_{ij}=0|Y_{ij}=1) = P(I^{(ij)}_r\leq I_{th}|Y_{ij}=1).
\end{equation}
This set-up leads to the read binary asymmetric channel, depicted in Fig. \ref{fig:cascaded_channel}, for the read operation with $p^{(ij)}_3 = P(Z_{ij}=1|Y_{ij}=0)$ and $p^{(ij)}_4 = P(Z_{ij}=0|Y_{ij}=1)$.

\subsubsection{Closed form Expression with Ideal Selectors}
Since we need to solve a system of equations to get $I^{(ij)}_r$, equations (\ref{equ:0_read_error}) and (\ref{equ:1_read_error}) are not sufficient as they do not give closed-form expressions for the channel parameters. However, if we consider ideal selectors, closed-form expressions can be derived. { Note that for the analysis with a non-ideal selector, one can still use the general characterizations in equations (\ref{equ:0_read_error}) and (\ref{equ:1_read_error}) and find $I^{(ij)}_r$ by solving a system of KCL equations. Moreover, it is reasonable to assume ideal characteristics of the selector for the analysis of the line resistance in the 1S1R structure as near ideal selector properties are demonstrated by industry in  \cite{jo20143d}.}

With ideal selectors, the part of the circuit connected to the un-selected cells can be neglected, resulting in a simplified circuit with just the selected cell and its wordline/bitline. With this simplified circuit, $I^{(ij)}_r$ is a function of the random variable $R_{ij}$, which represents the resistance of the selected cell. We therefore have:
\begin{equation}
\label{equ:I_r}
I^{(ij)}_r = \frac{V_r}{ir_b+jr_w+R_{ij}}.
\end{equation}
Plugging (\ref{equ:I_r}) into (\ref{equ:0_read_error}) and (\ref{equ:1_read_error}), and using the assumption that $R_{ij}$ is conditionally (on $Y_{ij}$) log-normally distributed, we obtain the following closed form expression for $p_3$ and $p_4$:
\begin{equation}
\label{equ:p3}
\begin{split}
p^{(ij)}_3 &= P\left(\frac{V_r}{ir_b+jr_w+R_{ij}}>I_{th}|Y_{ij}=0\right)=P\left(R_{ij}<\frac{V_r}{I_{th}}-ir_b-jr_w|Y_{ij}=0\right)\\
&=Q\left(\frac{\mu_{H}-\ln\left(\frac{V_r}{I_{th}}-ir_b-jr_w\right)}{\sigma_{H}}\right),
\end{split}
\end{equation}
and similarly
\begin{equation}
\label{equ:p4}
\begin{split}
p^{(ij)}_4 =Q\left(\frac{\ln\left(\frac{V_r}{I_{th}}-ir_b-jr_w\right)-\mu_{L}}{\sigma_{L}}\right).
\end{split}
\end{equation}

Define $R_{th} = \frac{V_r}{I_{th}}$. From equations (\ref{equ:p3}) and (\ref{equ:p4}), we observe that $R_{th}$ is the effective decision threshold between the HRS and LRS distribution in the resistance domain, when there {is} no line resistance, i.e., $r_w=r_b=0$. We can therefore interpret the adverse effect of line resistances during the read operation as follows: the effective read threshold in resistance domain is shifted to the left by the total accumulated line resistance. This shift results in a higher bit-error rate if $R_{th}$ is set to be the optimal decision threshold without considering the line resistance. 

The read margin is defined by the difference between the sensed current of a HRS cell and the sensed current of a LRS cell. Using equations (\ref{equ:I_r}) - (\ref{equ:p4}), we can now relate the read margin to the read BER. For example, comparing the best-case cell to the worst-case cell in the example in Section III.D Fig. \ref{fig:heat_map}, we observe that the read margin is dropped from $296 \mu A$ to $95 \mu A$ while the write BER is increased from $4.29\times10^{-4}$ to $7.33\times10^{-2}$, again demonstrating the need of a location dependent model.

\subsection{Cascaded Channel and Channel Capacity}
Combining the results of the previous two subsections, we get a cascaded channel for a single memory cell. The cascaded channel is a binary asymmetric channel and it is depicted in Fig. \ref{fig:cascaded_channel}, with $p^{(ij)}_5 = p^{(ij)}_1(1-p^{(ij)}_4)+(1-p^{(ij)}_1)p^{(ij)}_3$ and $p^{(ij)}_6 = p^{(ij)}_2(1-p^{(ij)}_3)+(1-p^{(ij)}_2)p^{(ij)}_4$.

The capacity of this cascaded channel for cell $(i,j)$ is as follows:
\begin{equation}
\label{equ:cell_capacity}
\begin{split}
C_{ij} &= \max_{q}\,I(X_{ij};Z_{ij})\\
&=\max_{q}\,\Bigg[h\left(q\left(1-p^{(ij)}_5\right)+(1-q)p^{(ij)}_6\right)-qh\left(p^{(ij)}_5\right)-(1-q)h\left(p^{(ij)}_6\right)\Bigg],
\end{split}
\end{equation}
where $h(\cdot)$ is the binary entropy function. Because $p^{(ij)}_1$ and $p^{(ij)}_2$ are dependent on $q$, the closed form capacity result for a standard BAC does not hold. The channel capacity therefore need to be evaluated with a numerical method, e.g., the Blahut-Arimoto algorithm, as further presented in Subsection D. 

\subsection{Simulations Results}
Based on our models presented in the previous subsections, we simulate multiple arrays to explore how memory parameters affect the memory reliability metrics, such as the bit-error rate (BER) and the averaged capacity. We calculate the averaged capacity by averaging the capacities of cells given by equation (\ref{equ:cell_capacity}); this result serves as an indicator of what fraction of the input data can be reliably stored in memory. Since this work is mainly focused on the adverse effect of the line resistance, we only vary the array size, aspect ratio, and line resistance in our simulations. Other memory parameters are kept the same and are summarized in Table \ref{table1}. As an illustrative example, the parameters are chosen to represent a moderate reliability level, with a BER on the order of $10^{-3}$ in the best case scenario. { The considered line resistances range from $10\Omega$ to $100\Omega$, in accordance with the interconnect resistance values of interest for moderate technology nodes \cite{liang2013effect}.  The chosen standard deviation (0.3) of LRS and HRS distribution is experimentally observed in \cite{ji2015line}. The chosen switching parameters are based on \cite{medeiros2011lognormal}; we use the same parameters for reset and set operations for simplified analysis.
}
\begin{table}[h!]
	\centering
	\renewcommand{\arraystretch}{1}
	\begin{tabular}{c|c|c}
		\hline
		\hline
		Symbol 		& Parameters 						& Values \\ 
		$m,n$			& Array Size ($m\times n$)          &   varies               \\ 
		$V_{w\_set}$			& Set voltage       &    -5V              \\ 
		$V_{w\_reset}$			& Reset voltage       &    5V              \\ 
		$V_{r}$			& Read voltage       &    3V              \\ 
		$q$		& Prior symbol probability of 0          & $0.5$                 \\ 
		$r_w$		&Wordline interconnect resistance       &   $10\Omega-100\Omega$               \\ 
		$r_b$		&Bitline interconnect resistance       &    $10\Omega-100\Omega$              \\ 
		$r_{sf}$		&Fully selected selector resistance    &         $0$       \\ 
		$r_{sh}$		&Half selected selector resistance    &         $\infty$       \\ 
		$r_{su}$		&Unselected selector resistance   &         $\infty$       \\ 
		$\mu_L$ 	& Associated mean of LRS distribution     	&    $4\ln(10)$              \\
		$\mu_H$ 	& Associated mean of HRS distribution      	&    $6\ln(10)$              \\
		$\sigma_L$ 	& Associated std of LRS distribution       	&      $0.3\ln(10)$            \\
		$\sigma_H$ 	& Associated std of HRS distribution       	&      $0.3\ln(10)$           \\
		$\alpha_{set}$ 	& Parameter for the median set time    	&      $0.25$            \\
		$\beta_{set}$ 	& Parameter for the median set time    	&       $4.25$           \\
		$\alpha_{reset}$ 	& Parameter for the median reset time    	&    $-0.25$              \\
		$\beta_{reset}$ 	& Parameter for the median reset time    	&     $4.25$             \\
		$\sigma_{set}$ 	& Associated std of set time distribution       	&        $0.5$          \\
		$\sigma_{reset}$ 	& Associated std of reset time distribution       	&        $0.5$          \\
		$t_{set}$ 	&  Switching time for set operation      	&    $100\mu s$              \\
		$t_{reset}$ 	&  Switching time for reset operation      	&      $100\mu s$            \\
		$I_{th}$ 	& Read decision threshold       	&      $30\mu A$            \\
		\hline
		\hline
	\end{tabular}
	\caption{Summary of Parameters.}
	\label{table1}
	\vspace{-1em}
\end{table}

\begin{figure}[h]
	\centering
	\begin{subfigure}{0.7\textwidth}
		\centering
		\includegraphics[scale=0.3]{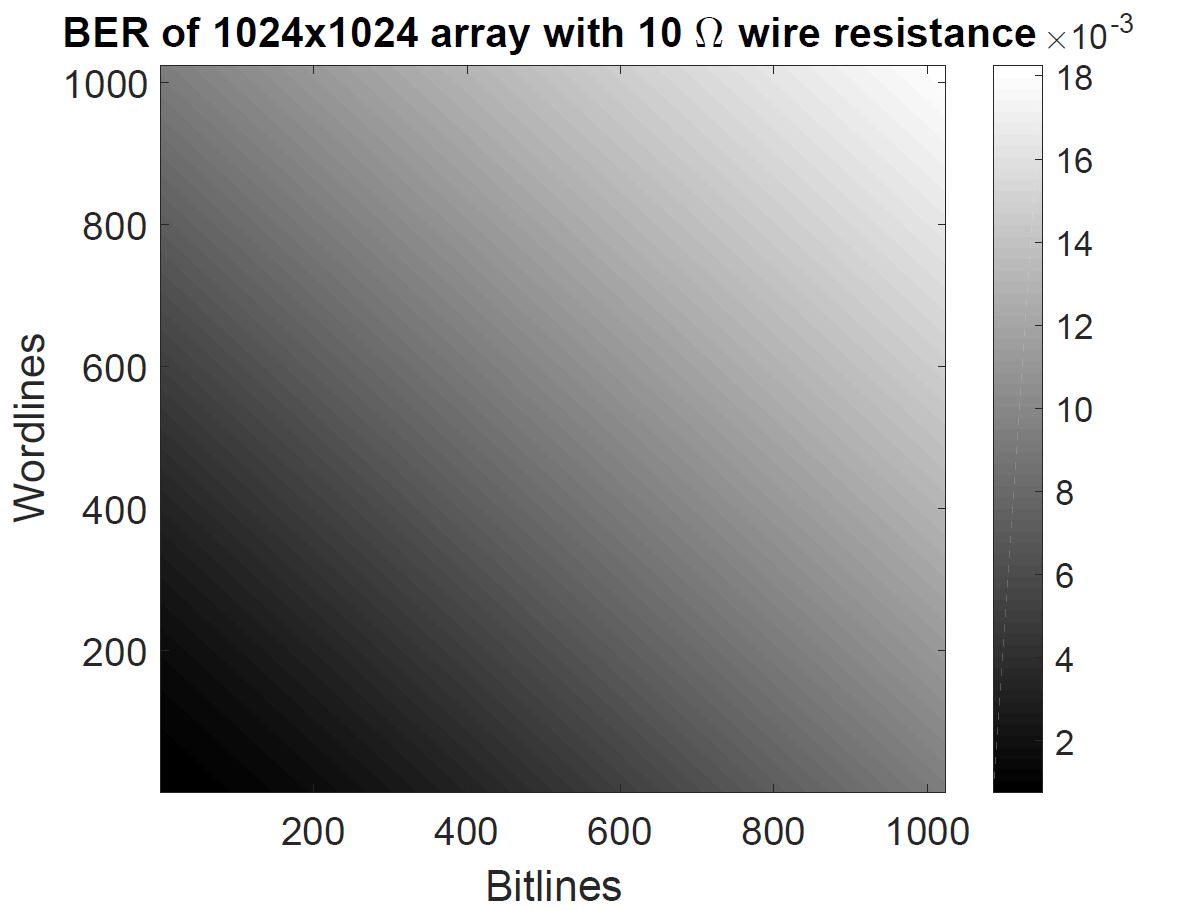}	
		\caption{Heatmap of BERs for a 1024x1024 array.}
		\label{fig:heat_map}
	\end{subfigure}
	\begin{subfigure}{0.49\textwidth}
		\centering
		\includegraphics[scale=0.35]{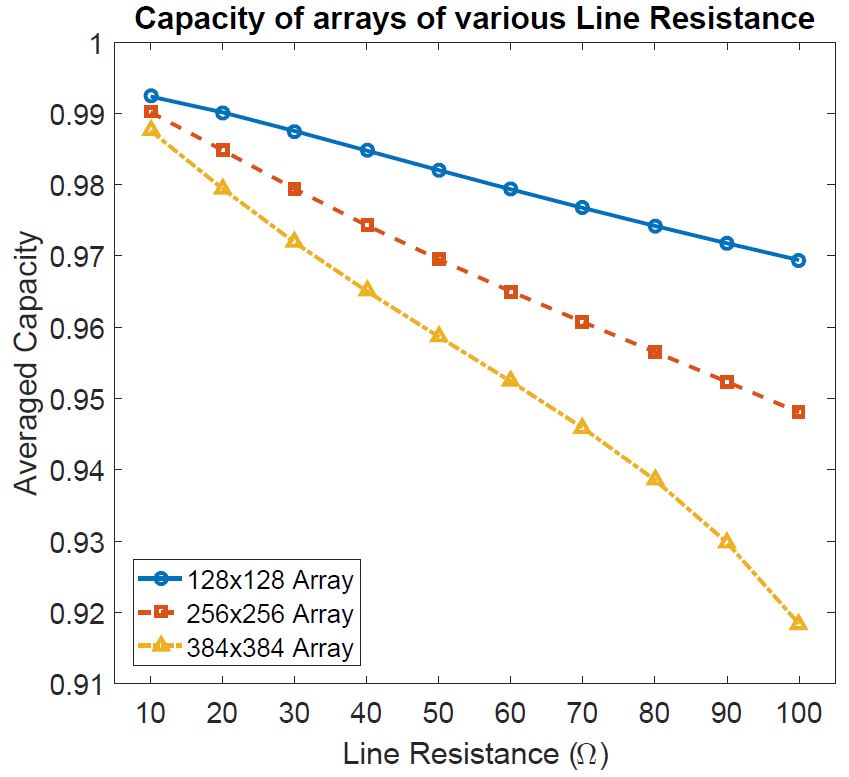}	
		\caption{Capacity results of various line resistances.}
		\label{fig:Capacity_vs_rl}
	\end{subfigure}
	\begin{subfigure}{0.49\textwidth}
	\centering
	\includegraphics[scale=0.35]{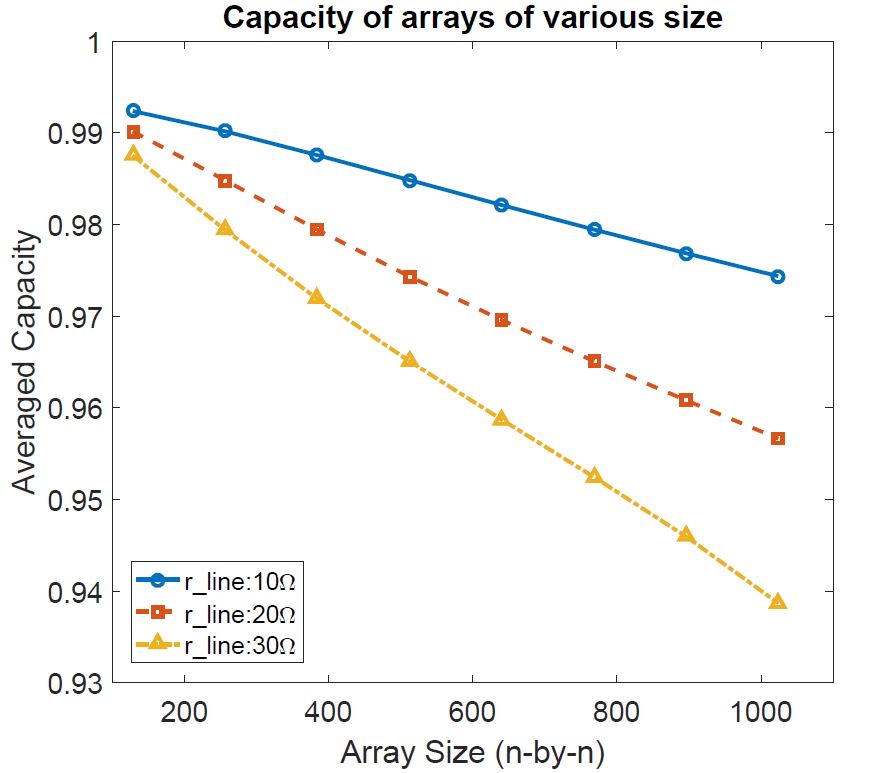}	
	\caption{Capacity results of various array sizes.}
	\label{fig:Capacity_vs_N}
\end{subfigure}
	\caption{Simulation Results for the Proposed Channel Models.}
\end{figure}
In Fig. \ref{fig:heat_map}, we first present the BER of each cell in a $1024\times1024$ array to illustrate the spatial variation of reliability due to the line resistance. According to \cite{liang2013effect}, the chosen $10\Omega$ line resistance corresponds to the resistance per junction of Cu wire with 20nm technology nodes. With this moderate line resistance, we observe an order of magnitude BER difference between the best-case cell, located closest to the voltage source, and the worst-case cell, located furthest from the voltage source. Due to line resistance, the cell which is further from the source and sensing amplifier, suffers from a lower voltage delivery during the write operation and a higher resistance interference during the read operation, thus has a larger BER. 

Next, in Fig. \ref{fig:Capacity_vs_rl} and Fig. \ref{fig:Capacity_vs_N}, we present the averaged capacity per cell for arrays with various size and line resistances, with aspect ratio fixed to be $1$. We observe that a larger line resistance, which corresponds to a smaller technology node, deteriorates the averaged capacity almost linearly. This trade-off thus must be taken into consideration when scaling the memory, as it is shown in \cite{liang2013effect} that the line resistance scales exponentially with respect to the technology node. Also note that when the accumulated line resistance of the worst-case cell gets close to the effective resistance threshold, i.e., when $nr_w+mr_b$ is close to $R_{th}$, the averaged capacity deteriorates faster, as from (\ref{equ:p4}), when $nr_w+mr_b>R_{th}$, reading from a LRS cell correctly is impossible. This explains the rapid dropping at the end of the curve in Fig. \ref{fig:Capacity_vs_rl} for the $384\times384$ array. From  Fig. \ref{fig:Capacity_vs_N}, we notice that the averaged capacity also deteriorates almost linearly with respect to the array size. This effect is thus a limiting factor for the realization of a large memory array.

\begin{table}[h!]
	\renewcommand{\arraystretch}{1.1}
	\centering
	\begin{tabular}{|c|c|c|c|c|c|c|}
		\hline
		Array Size & $128\times128$ &$64\times512$  &$32\times512$  & $16\times1024$ &$8\times2048$  &$4\times4096$ \\ \hline
		Averaged Capacity   & $0.9924$ & $0.9918$ & $0.9897$  & $0.9845$ & $0.9745$ & $0.9573$ \\ \hline
	\end{tabular}
	\caption{Capacity of arrays with different aspect ratios.}
	\label{table2}
	\vspace{-1em}
\end{table}

We further investigate how the aspect ratio affects the averaged capacity by simulating arrays with the same number of cells but different aspect ratios. In Table \ref{table2}, with a total of $16384$ cells, the square array (aspect ratio = 1) has the largest averaged capacity and the $4\times 4096$ array, which has the largest aspect ratio, has the lowest averaged capacity. Intuitively, this can be explained by a larger possible cumulative line resistance $nr_w+mr_b$ in an array with a larger aspect ratio. This observation presents a trade-off between the sometimes desired high aspect ratio and a high averaged capacity. 

\section{Optimal Read Threshold}
In Section III.B, we observe that the channel parameters $p_3^{(ij)}$ are $p_4^{(ij)}$ are dependent on the read current threshold $I_{th}$ --- a user defined parameter that can be optimized for lower raw bit-error rate. In this section, we study the optimal threshold for each cell and for an entire array with the goal of reducing the  raw bit-error rate. We choose to optimize the read resistance threshold $R_{th}=\frac{V_r}{I_{th}}$ as it is equivalent to optimizing $I_{th}$ with fixed read voltage $V_r$. We define $R_{th0}$ be the optimal threshold when no line resistance is considered, i.e.,
\begin{equation}
\label{equ:Rth0_opt}
R_{th0}=\underset{R_{th}}{\text{argmin}}\, qQ\left(\frac{\mu_H-\ln(R_{th})}{\sigma_H}\right)+(1-q)Q\left(\frac{\ln(R_{th})-\mu_L}{\sigma_L}\right).
\end{equation}
$R_{th0}$ can be calculated using standard result from estimation theory. $R_{th0}$ is clearly suboptimal when the line resistance is non-negligible. 

\subsection{Optimal Threshold for Each Cell}
One simple read scheme is to use the optimal resistance thresholds for cells at different locations. We call this a different threshold for each cell (DTEC) scheme. Define the optimal threshold for the cell $(i,j)$ to be $R^{ij}_{th}$. With the objective of minimizing $P(Z_{ij}~=Y_{ij})$, and using equation (\ref{equ:p3}) and equation (\ref{equ:p4}), we have:
\begin{equation}
\label{(equ:Rthij_opt)}
R^{ij}_{th}=\underset{R_{th}}{\text{argmin}}\, qQ\left(\frac{\mu_H-\ln(R_{th}-ir_b-jr_w)}{\sigma_H}\right)+(1-q)Q\left(\frac{\ln(R_{th}-ir_b-jr_w)-\mu_L}{\sigma_L}\right).
\end{equation}
Comparing equations (\ref{equ:Rth0_opt}) and (\ref{(equ:Rthij_opt)}), we get
\begin{equation}
\label{equ:opt_thr_cell}
R^{ij}_{th} = R_{th0}+ir_b+jr_w.
\end{equation}
This result is intuitive as we need to shift the threshold to the right in order to compensate for the adverse effect of the cumulative line resistance. 
\subsection{Optimal Threshold for An Array}
Requiring different thresholds for cells in a $m\times n$ array may not be desirable for circuit designers as doing this may require a lot more comparators. In a typical memory design, cells on the same bitline share the same sensing amplifier, so one threshold for each column is a reasonable choice. To further simplify the memory design, one may even use the same threshold for the entire memory array. Therefore, it is of interest to find the optimal threshold that minimizes the averaged BER for an entire array or a sub-array (such as a column). In this subsection, we deal with the optimal threshold for an array first; this result readily generalizes to any sub-array. We call these schemes the same threshold for many cells (STMC) schemes.

With the objective of minimizing $\frac{1}{mn}\sum_{i,j}P(Z_{ij}~=Y_{ij})$, the optimal threshold for an array is defined as
\begin{equation}
\label{(equ:Rtharray_opt)}
\begin{split}
R_{th\_array} = &\underset{R_{th}}{\text{argmin}}\frac{1}{mn}\sum_{i,j}^{m,n}\Bigg[ qQ\left(\frac{\mu_H-\ln(R_{th}-ir_b-jr_w)}{\sigma_H}\right)\\
&+(1-q)Q\left(\frac{\ln(R_{th}-ir_b-jr_w)-\mu_L}{\sigma_L}\right)\Bigg].
\end{split}
\end{equation}

{While it may be possible to heuristically optimize the single parameter $R_{th}$, in practice we wish to provide an analytical solution based on certain approximations and an efficient iterative search algorithm. Analytically,} the optimization problem in (\ref{(equ:Rtharray_opt)}) is hard to solve for  as it involves a summation of Q-functions. We instead replace this objective function with its upper bound and try to minimize this upper bound. Using Jensen's inequality and the fact that the Q-function is concave for a positive argument, we have the following bound:
\begin{equation}
\begin{split}
&\frac{1}{mn}\sum_{i=1}^{m}\sum_{j=1}^{n}\Bigg[ qQ\left(\frac{\mu_H-\ln(R_{th}-ir_b-jr_w)}{\sigma_H}\right)+(1-q)Q\left(\frac{\ln(R_{th}-ir_b-jr_w)-\mu_L}{\sigma_L}\right)\Bigg]\\
\leq&qQ\left(\frac{\mu_H-A}{\sigma_H}\right)+(1-q)Q\left(\frac{A-\mu_L}{\sigma_L}\right),
\end{split}
\end{equation}
where
\[A = \frac{1}{mn}\sum_{i=1}^{m}\sum_{j=1}^{n}\Big[\ln(R_{th}-ir_b-jr_w)\Big].\]
The gap between the two sides of this inequality is small when the Q-functions are close to being linear, which is indeed the case for Q-functions with large arguments. 

We reformulate the problem using the above inequality:
\begin{equation}
\label{(equ:Rtharray_opt2)}
R_{th\_array}=\underset{R_{th}}{\text{argmin}}\,qQ\left(\frac{\mu_H-A}{\sigma_H}\right)+(1-q)Q\left(\frac{A-\mu_L}{\sigma_L}\right).
\end{equation}
Comparing equations (\ref{equ:Rth0_opt}) and (\ref{(equ:Rtharray_opt2)}), the problem becomes of finding $R_{th\_array}$ such that 
\begin{equation}
\label{equ:log_equ}
\ln(R_{th0})=\frac{1}{mn}\sum_{i=1}^{m}\sum_{j=1}^{n}\Big[\ln(R_{th\_array}-ir_b-jr_w)\Big].
\end{equation}
Equation (\ref{equ:log_equ}) is hard to solve since it contains a summation of logarithm functions. We provide both an approximation to this equation that is easier to solve, as well as an iterative algorithm that produces a solution to the original equation.

Approximating each term in the right hand side summation by $\ln(R_{th\_array}-\frac{m+1}{2}r_b-\frac{n+1}{2}r_w)$, the approximate solution of (\ref{equ:log_equ}) can be found:
\begin{equation}
\label{equ:log_equ_sol_1}
R_{th\_array}\approx R_{th\_array\_appx}=R_{th0}+\frac{m+1}{2}r_b+\frac{n+1}{2}r_w.
\end{equation}
This approximation can be also interpreted as averaging of the optimal thresholds, given by equation (\ref{equ:opt_thr_cell}), of all cells.

We propose Algorithm 1 to compute the exact solution of equation (\ref{equ:log_equ}).

\begin{algorithm}[h]
	\caption{STMC threshold solver algorithm}
	\SetAlgoLined
	1. Initialize $R^{(0)}_{th}=R_{th0},{l}=0$.\\
	2. ${l}={l}+1$.\\
	\quad Calculate $R^{({l})}_{th}$ such that $\ln(R^{({l})}_{th}) = \ln(R_{th0})-\frac{1}{mn}\sum_{i=1}^{m}\sum_{j=1}^{n}\ln\left(1-\frac{ir_b+jr_w}{R^{({l}-1)}_{th}}\right)$.\\
	3. Repeat step 2 until a certain iteration is reached or $R^{({l})}_{th}-R^{({l}-1)}_{th}\leq\epsilon$. Let $R_{th\_array} = R^{({l})}_{th}$.
\end{algorithm}

The STMC threshold solver algorithm is inspired by how the summation of logarithm functions is handled in the Expectation Maximization algorithm. The convergence of the STMC threshold solver algorithm is demonstrated in the following lemma.
\begin{lemma}
	The STMC threshold solver algorithm converges to the solution of equation (\ref{equ:log_equ}). 
\end{lemma}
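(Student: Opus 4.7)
The plan is to cast Algorithm 1 as a fixed-point iteration $R^{(l)}=F(R^{(l-1)})$, where
$F(R):=R_{th0}\prod_{i,j}\left(1-\frac{ir_b+jr_w}{R}\right)^{-1/(mn)}$,
show that the fixed points of $F$ are precisely the solutions of (\ref{equ:log_equ}), and then prove that the iterates produced by the algorithm converge to such a fixed point.

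First I would verify that $F(R^*)=R^*$ is algebraically equivalent to (\ref{equ:log_equ}): taking logarithms and using $\ln(1-c_{ij}/R^*) = \ln(R^*-c_{ij}) - \ln R^*$ (with the shorthand $c_{ij}:=ir_b+jr_w$), the $\ln R^*$ terms cancel and (\ref{equ:log_equ}) emerges. Uniqueness of the solution follows by noting that $\Phi(R):=\frac{1}{mn}\sum_{i,j}\ln(R-c_{ij})-\ln R_{th0}$ has strictly positive derivative on $(c_{\max},\infty)$, where $c_{\max}:=mr_b+nr_w$, and ranges over $(-\infty,+\infty)$ as $R$ traverses this interval.

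Next I would handle well-posedness and the basic structure of the iteration. In the only meaningful regime $R_{th0}>c_{\max}$ (otherwise some cells cannot be read correctly in principle), each factor $(1-c_{ij}/R)^{-1/(mn)}$ exceeds one whenever $R>c_{\max}$, so $F(R)>R_{th0}>c_{\max}$ on the domain and every iterate stays valid. A direct differentiation gives $F'(R)<0$, so $F$ is strictly decreasing with the single fixed point $R^*$, and the iterates oscillate around $R^*$: $R^{(l)}<R^*$ forces $R^{(l+1)}>R^*$, and conversely.

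Convergence then reduces to studying the strictly increasing map $F\circ F$. Using the boundary behavior $F^2(R)\to R_{th0}$ as $R\to c_{\max}^+$ and $F^2(R)\to F(R_{th0})<\infty$ as $R\to\infty$, together with continuity of $F^2$ and uniqueness of its fixed point, one obtains $F^2(R)>R$ on $(c_{\max},R^*)$ and $F^2(R)<R$ on $(R^*,\infty)$; the even-indexed and odd-indexed subsequences are therefore monotone and bounded, hence convergent, and both limits must equal $R^*$. The main obstacle is precisely ruling out a nontrivial $2$-cycle $\{a,b\}$ with $F(a)=b$, $F(b)=a$, $a\ne b$, which would supply additional fixed points of $F^2$ and could give the two subsequences distinct limits. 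I would address this by computing $|F'(R^*)| = \frac{1}{mn}\sum_{i,j}c_{ij}/(R^*-c_{ij})$ and arguing that in the practical regime $c_{\max}\ll R^*$ relevant to the paper this quantity is strictly less than one, so $F\circ F$ is a strict contraction on a neighborhood of $R^*$ that traps the iterates for $l$ sufficiently large, which both rules out 2-cycles and forces convergence to $R^*$.
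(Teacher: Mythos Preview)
Your approach mirrors the paper's: rewrite the iteration as $R^{(l)}=F(R^{(l-1)})$, observe that $F$ is decreasing so the iterates alternate around the fixed point $R^*$, and then argue that the even- and odd-indexed subsequences are monotone and bounded. The paper does exactly this (with a few sign slips in the inequalities) and simply declares that the two subsequences ``shrink toward $\ln(R_{th\_array})$,'' without ever addressing why their limits must coincide. You are more careful: you explicitly name the obstruction, namely the possibility of a nontrivial $2$-cycle of $F$, which would give $F\circ F$ three fixed points and let the even and odd subsequences settle at distinct limits $a<R^*<b$.

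The gap is in how you dispose of that obstruction. Computing $|F'(R^*)|=\frac{1}{mn}\sum_{i,j}c_{ij}/(R^*-c_{ij})$ and observing it is below $1$ when $c_{\max}\ll R^*$ only tells you that $(F\circ F)'(R^*)<1$, a \emph{local} statement; it does not by itself exclude a $2$-cycle $\{a,b\}$ sitting away from $R^*$. Your clause ``a neighborhood of $R^*$ that traps the iterates for $l$ sufficiently large'' is circular: you only know the iterates enter that neighborhood once you already know they converge to $R^*$, which is what you are trying to prove. To close the argument you need a \emph{global} estimate, e.g.\ show $(F\circ F)'(R)=F'(F(R))F'(R)<1$ for all $R$ in the invariant interval $[R_{th0},F(R_{th0})]$, so that $F\circ F$ is a contraction there and the Banach fixed-point theorem applies directly; under your standing hypothesis $c_{\max}\ll R_{th0}$ this bound is attainable by bounding $|F'(R)|\le \frac{F(R)}{R}\cdot\frac{c_{\max}}{R-c_{\max}}$ uniformly on that interval. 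Alternatively, show directly that $F(a)=b,\ F(b)=a$ with $a\neq b$ is impossible by manipulating the two defining equations. Either route would make your proof strictly more complete than the paper's, which leaves this point unaddressed.
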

\begin{proof}
	Equation (\ref{equ:log_equ}) can be rewritten as:
	\[\ln(R_{th\_array}) = \ln(R_{th0})-\frac{1}{mn}\sum_{i=1}^{m}\sum_{j=1}^{n}\ln\left(1-\frac{ir_b+jr_w}{R_{th\_array}}\right).\]
	With $R^{(0)}_{th}\geq  R_{th\_array}$, the following inequality holds:
	\[
	\begin{split}
	\ln(R_{th\_array})&\geq \ln(R^{(1)}_{th})= \ln(R_{th0})-\frac{1}{mn}\sum_{i=1}^{m}\sum_{j=1}^{n}\ln\left(1-\frac{ir_b+jr_w}{R^{(0)}_{th}}\right).
	\end{split}\]
	By induction, $\ln(R_{th\_array})$ can be bounded:
	$\ln(R^{(2j+1)}_{th})\leq \ln(R_{th\_array})\leq \ln(R^{(2j)}_{th}), j\in\mathbb{Z}.$
	With {$R^{(2)}_{th}> R^{(0)}_{th}=R_{th0}$}, since $\ln(R^{({l})}_{th}) = \ln(R_{th0})-\frac{1}{mn}\sum_{i=1}^{m}\sum_{j=1}^{n}\ln\left(1-\frac{ir_b+jr_w}{R^{({l}-1)}_{th}}\right)$, the subsequent upper bounds and lower bounds are shrinking toward $\ln(R_{th\_array})$, i.e., $\ln(R^{(2j+2)}_{th})< \ln(R^{(2j)}_{th})$ and $\ln(R^{(2j+3)}_{th})>\ln(R^{(2j+1)}_{th})$. Therefore, the STMC threshold solver algorithm converges.
\end{proof}
{For the simulations we report on in the next section,  the STMC threshold solver algorithm converges in $5$ iterations which is more effective than performing a line search to solve equation (\ref{(equ:Rtharray_opt)}) empirically.}

\subsection{Simulation Results}
We simulate arrays to examine how the averaged read BERs are affected by different read thresholding schemes. Four different thresholding schemes are compared: the naive scheme where $R_{th0}$ is used as the threshold; the DTEC scheme; the STMC scheme with the approximated solution in (\ref{equ:log_equ_sol_1}); and the STMC scheme with the exact solution solved by Algorithm 1. Unless otherwise mentioned, we use parameters from Table \ref{table1}. We first vary the array size and fix $r_b=r_w=30\Omega$ and report the result in Fig. \ref{fig:thr_BER_vs_n}. We also vary the wire resistance and fix $n=m=1024$ and report the result in Fig. \ref{fig:thr_BER_vs_r_wire}. 

\begin{figure}[h]
	\centering
	\begin{subfigure}{0.49\textwidth}
	\centering
	\includegraphics[scale=0.35,clip]{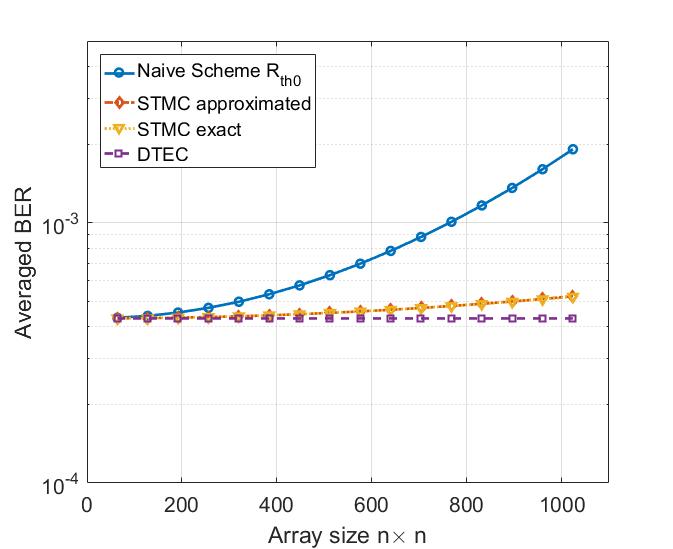}
	\caption{Averaged read BER v.s. array size using different thresholding scheme.}
	\label{fig:thr_BER_vs_n}
	\end{subfigure}
	\begin{subfigure}{0.49\textwidth}
	\centering
	\includegraphics[scale=0.35,clip]{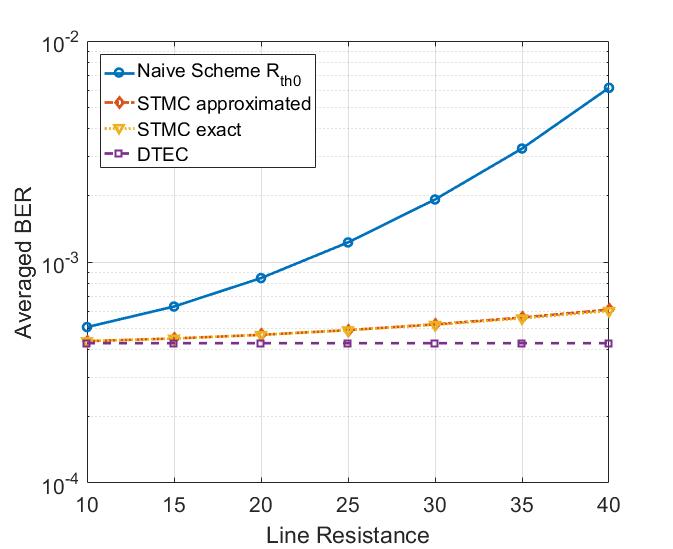}
	\caption{Averaged read BER v.s. wire resistance using different thresholding scheme.}
	\label{fig:thr_BER_vs_r_wire}
	\end{subfigure}
	\caption{Simulation Results Using the Optimal Threshold.}
\end{figure}

From both Fig. \ref{fig:thr_BER_vs_n} and Fig. \ref{fig:thr_BER_vs_r_wire}, we observe that both DTEC and STMC schemes reduce the read BER significantly compared with the naive thresholding scheme. The DTEC scheme compensates for each cell based on its location, and, as a result, the averaged BER is independent of the array size and wire resistance. As expected, the averaged BERs using the exact solution of the STMC scheme is smaller than the averaged BERs using the approximated solution. However, the improvement is incremental and can not be identified on the plots. This shows that equation (\ref{equ:log_equ_sol_1}) approximates the solution of equation (\ref{equ:log_equ}) very well. 

To investigate the effect of the optimized read threshold on the non-uniformity of read reliability, we generate heatmaps depicting the read BER with the unoptimized read threshold (Fig. \ref{fig:heatmap_original_read}) and with the optimized read threshold by the STMC scheme (Fig. \ref{fig:heatmap_optimal_read}). 
\begin{figure}[h]
	\centering
	\begin{subfigure}{0.49\textwidth}
		\centering
		\includegraphics[scale=0.27,clip]{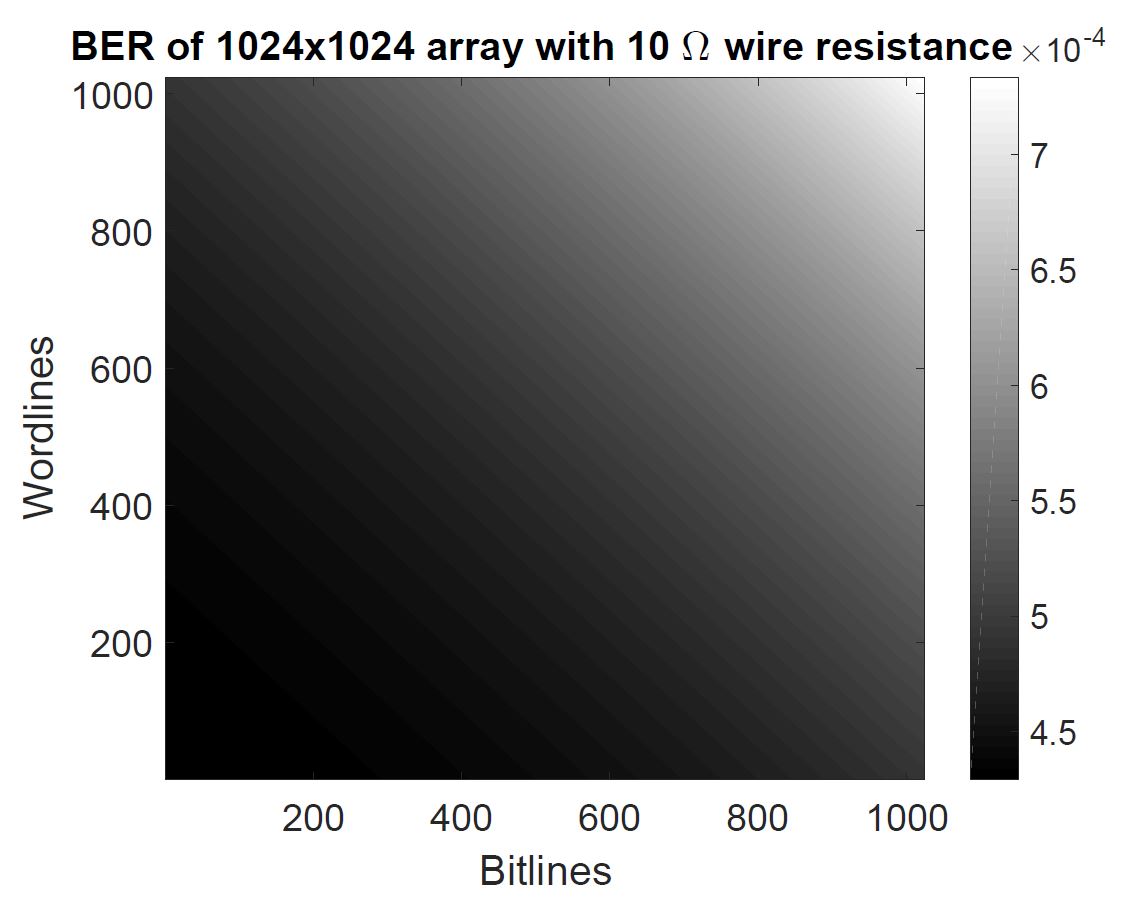}
		\caption{Read BER heatmap with $R_{th0}$.}
		\label{fig:heatmap_original_read}
	\end{subfigure}
	\begin{subfigure}{0.49\textwidth}
		\centering
		\includegraphics[scale=0.27,clip]{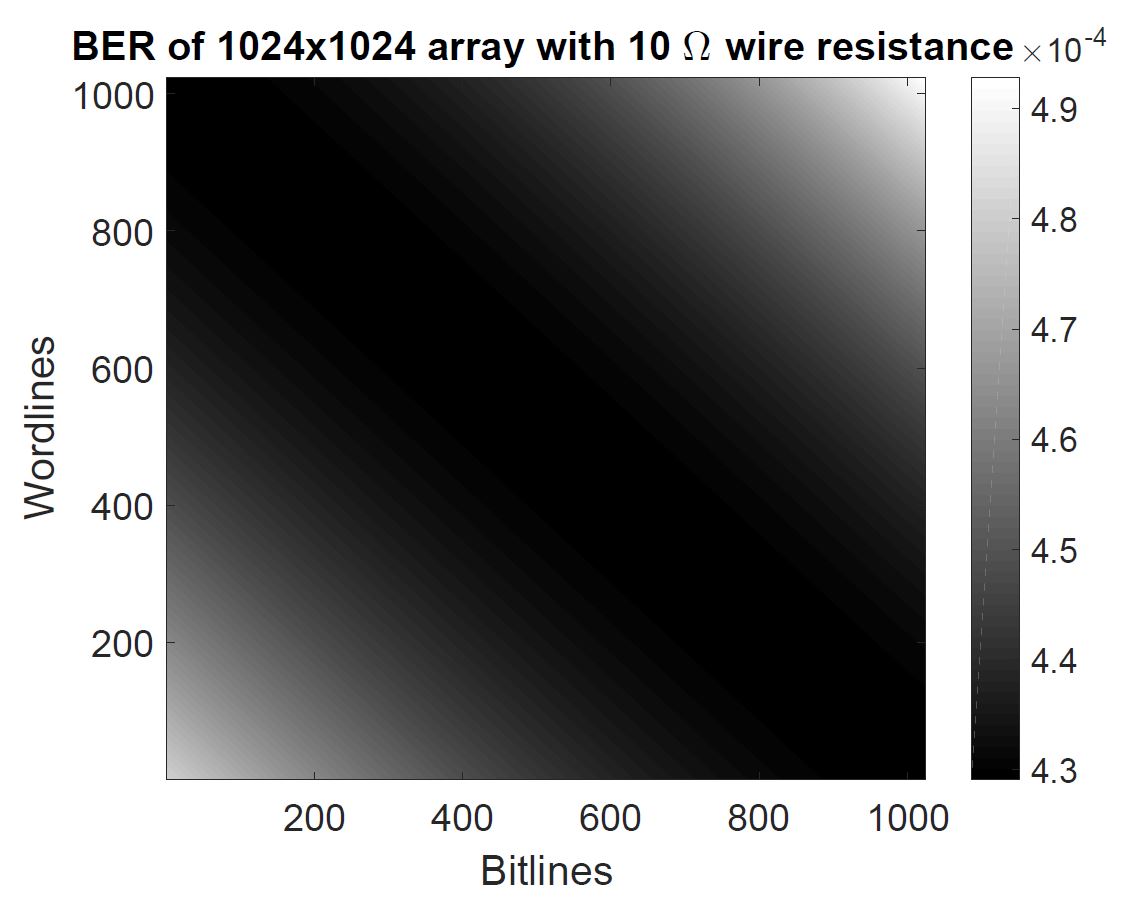}
		\caption{Read BER heatmap with STMC scheme.}
		\label{fig:heatmap_optimal_read}
	\end{subfigure}
	\caption{Heatmap Comparison Between Unoptimized and Optimized Read Thresholds.}
\end{figure}

Similar to what we observed in Fig. \ref{fig:heat_map} in Section III.D, with the unoptimized threshold, the lower left corner cell is the most reliable one and the upper right corner cell is the least reliable one. With the optimized read threshold, the non-uniformity pattern changes and the reliable cells lie on the diagonal of the array. This result is expected because from equation (\ref{equ:log_equ_sol_1}), we observe that the approximated solution for the STMC scheme is the optimal read threshold for the centermost cell of an array. Note that in our simulation, the device parameters are selected such that the read channel with unoptimized threshold is comparable to the write channel. Because of this set-up, with the optimized read threshold, the read channel is dominated by the write channel and the non-uniformity pattern of the cascaded channel is unchanged. 

\section{Channel Coding for Storage-Class Memory (SCM) Applications}
Error correction codes (ECCs) have become an essential part in memory systems. In an array with largely non-uniform device reliability, efficient ECC solutions must take this non-uniformity into consideration by either mitigating it or leveraging it. Channels with non-uniformity can be considered as channels with SNR variation \cite{esfahanizadeh2018spatially} or non-stationary channels \cite{zorgui2019polar,mahdavifar2020polar}. These models have been studied in the ECC literature in the context of e.g., LDPC codes and Polar codes for storage applications \cite{zorgui2019polar,esfahanizadeh2018spatially,mahdavifar2020polar}. However, these complex codes with long block length and high decoding latency (on the order $\mu s$ \cite{zhao2013ldpc}) are not compatible with the small decoding granularity and fast decoding requirements of the SCM applications. Therefore, in targeting the SCM applications, we study efficient coding schemes for crossbar resistive memory with high line resistance and base them on simple yet effective BCH codes with short block length. BCH codes, which are characterized by the block length ($n$), number of information bits ($k$), and the error correction capability ($t$), have decoding latency on the order of $ns$ and are widely adopted in crossbar resistive memory \cite{niu2012low,choi2018decoder,mao2017multilayer}. 

In subsection V.A, based on a single BCH code, we propose an interleaved coding scheme to mitigate the non-uniformity by allowing cells that store different codewords to have similar averaged RBER. In subsection V.B, we propose a method to use different BCH code in different wordline to leverage the non-uniformity, and do so without interleaving. A systematic framework to optimize the allocations of codes to wordlines is also proposed in V.B. {The two proposed approaches are suitable for different scenarios. The interleaved coding scheme in subsection V.A is conceptually simpler and requires only one pair of ECC encoder and decoder. Meanwhile, the interleaved coding scheme requires interleaver and de-interleaver, which adds additional hardware and latency overhead. The proposed scheme in subsection V.B does not use an interleaving operation but requires an optimization process in the design stage and also requires additional ECC encoders/decoders due to the use of multiple ECCs. Also, because the optimization steps are not typically performed on-the-fly, performance of the proposed scheme based on an optimized allocation can deteriorate if device parameters vary significantly in time. We present both schemes for completeness and the usefulness of each of them in a specific application depends on their performance and the trade-off between hardware/latency overhead of the interleaver/de-interleaver and that of the additional encoders/decoders. Some performance trade-offs between these two approaches will be discussed at the end of subsection V.B when performance of these two schemes are compared.} Note that in the following two subsections, our simulations consider both the read and write channel and are based on parameters in Table \ref{table1}. The read thresholds are optimized using the STMC scheme in Section IV.
\subsection{Single ECC with interleaving}
Suppose a codeword is stored in a wordline in the crossbar memory. When employing the same ECC for all wordlines, the non-uniformity of raw bit-error rates (RBERs), as depicted in Fig. \ref{fig:heat_map}, transforms into the non-uniformity of undetected bit-error rates (UBERs) and thus makes this coding scheme inefficient. If the ECC is designed for an averaged case, e.g., the channel conditions in the middle wordlines, it can be an overkill for the channel conditions in the lower wordlines while being inadequate for the channel conditions in the upper wordlines. One approach to mitigate non-uniformity of channels is interleaving e.g., \cite{esfahanizadeh2018spatially}. Based on the dependency of channel parameters on the row and column indexes of the cell location, we propose a sub-diagonal interleaved coding scheme which is illustrated in Fig. \ref{fig:interleaving}.
\begin{figure}[h]
	\centering
	\includegraphics[scale=0.45,clip]{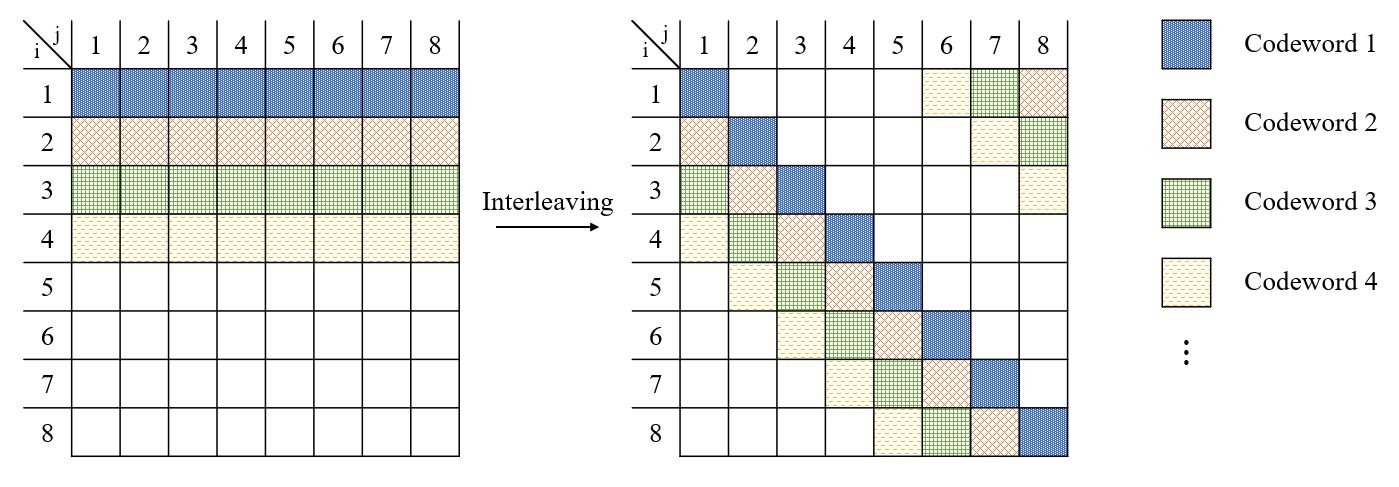}
	\caption{Illustration of the Sub-Diagonal Interleaved Coding Scheme on an $8\times8$ Array.}
	\label{fig:interleaving}
\end{figure}

Based on this sub-diagonal interleaved coding scheme, we store each codeword in cells that are located on the same diagonal or a sub-diagonal of the memory array, instead of cells that are located on the same wordline. Note that this interleaved coding scheme can be readily generalized to a non-square array by storing a codeword in cells located on the main diagonal or a circularly shifted main diagonal of the rectangular array. 
\begin{figure}[h]
	\centering
	\includegraphics[scale=0.3,clip]{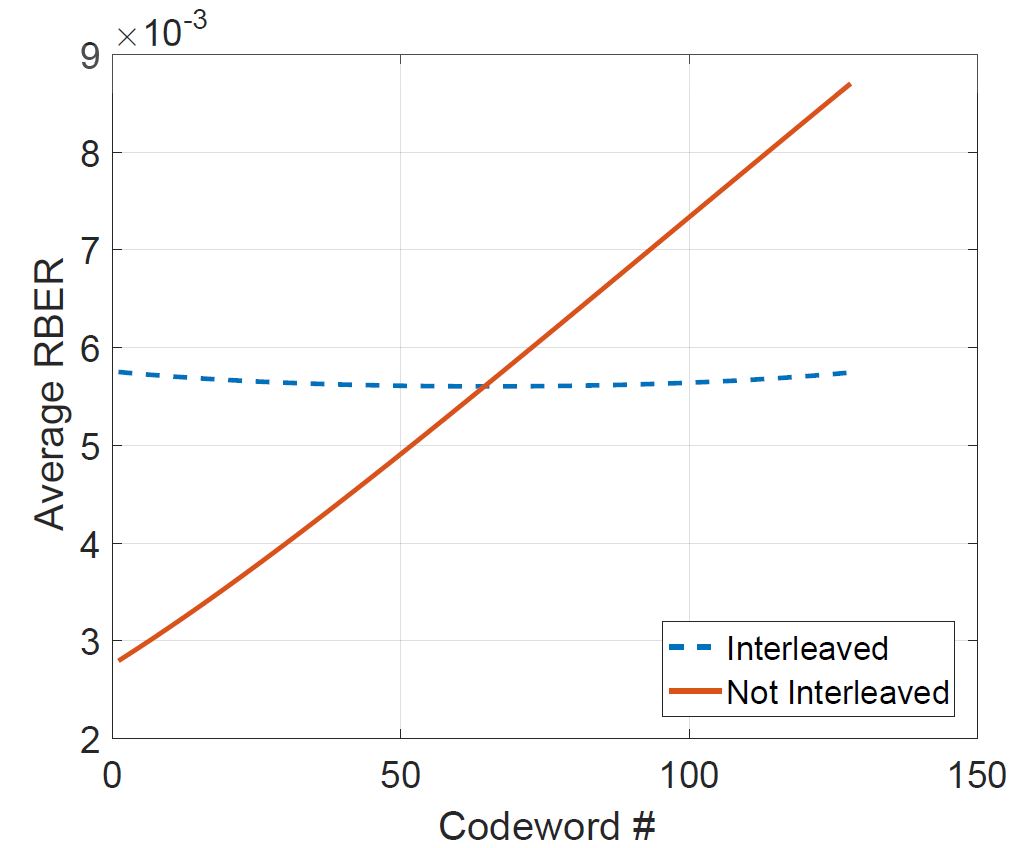}
	\caption{The averaged RBER for Cells  Storing codewords in an $128\times128$ Array with $r_w=r_b=50\Omega$.}
	\label{fig:avgRBER_interleaving}
\end{figure}

To study the effectiveness of this interleaved coding scheme, we first compare the averaged RBER in cells that store each codeword when using the interleaved coding scheme with the averaged RBER in cells that store each codeword where bits of the codeword are stored in a wordline. The results for an $128\times 128$ array are shown in Fig. \ref{fig:avgRBER_interleaving} and we observe that the non-uniformity of averaged RBER among codewords is largely mitigated.

\begin{figure}[h]
	\centering
	\begin{subfigure}{0.49\textwidth}
		\centering
		\includegraphics[scale=0.3,clip]{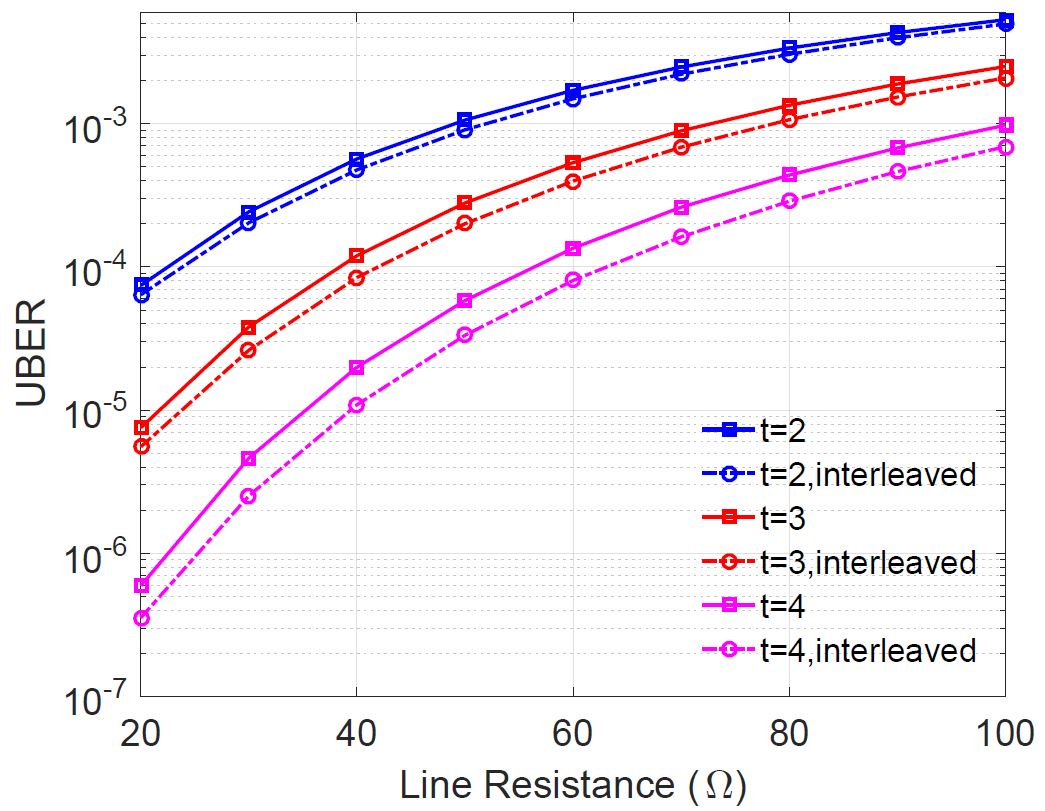}
		\caption{n=128}
		\label{fig:UBER_interleaving_n128}
	\end{subfigure}
	\begin{subfigure}{0.49\textwidth}
		\centering
		\includegraphics[scale=0.3,clip]{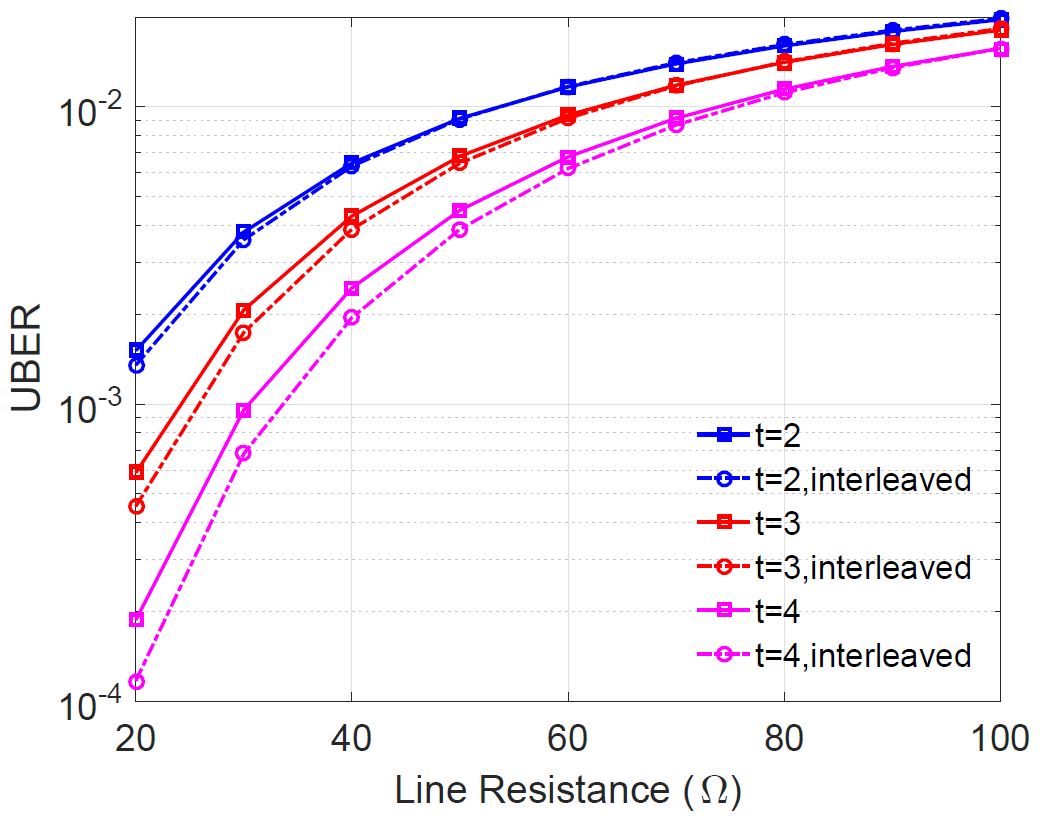}
		\caption{n=256}
		\label{fig:UBER_interleaving_n256}
	\end{subfigure}
	
	\caption{Decoding Performance of the Interleaved Coding Scheme.}
	\label{fig:interleaving_performance}
\end{figure}
We next present the decoding performance of this interleaved coding scheme using BCH codes with block length $n=128$ and $256$, and error correction capability $t=2,3,$ and $4$ in Fig. \ref{fig:interleaving_performance}. In Fig. \ref{fig:interleaving_performance}, the interleaved coding scheme effectively reduces the UBER with a maximum reduction of 45\%. We observe that the interleaved coding scheme is more effective in the regime of short block length and moderate line resistance. We conjecture that even though the interleaved coding scheme is able to mitigate the non-uniformity of the averaged RBER seen by each codeword, it exaggerates the non-uniformity of RBER within a codeword. With a larger block length and line resistance, this trade-off is less favored and the interleaved coding scheme is less effective. We also observe that the UBER improvement is larger with a larger error correction capability. This result is due to the observation that a larger error correction capability gives the interleaved coding scheme more room to balance channels where the code is an overkill  and channels where the code is insufficient, under the non-interleaved case.

\subsection{Multiple ECCs with optimized code allocation}
Although the interleaved coding scheme in the previous subsection is conceptually simple and requires only one ECC, interleaving and de-interleaving are costly operations. Moreover, these operations create additional difficulties if one seeks to utilize the parallel read capability of a memory with the crossbar structure \cite{liang2013effect}. Without interleaving, one intuitive approach to leverage the non-uniform RBERs is to use a strong ECC for relatively unreliable wordlines and use a weak ECC for relatively reliable wordlines. This approach has a reasonable complexity overhead compared to using a single ECC because encoders and decoders for structured codes, e.g., BCH codes, with different error correction capabilities can share the same encoding and decoding circuitry.  Given a set of ECCs with varying error correction capability and a set of wordlines with varying channel parameters, we refer to the problem of designating an ECC to an wordline for all wordlines as the code allocation problem. Similar problems, i.e., using different ECCs for different channel conditions, have been studied in communications under the context of adaptive coding \cite{vucetic1991adaptive} where the channel condition is often time dependent (instead of being location dependent as in our set-up). In this subsection, we present a systematic framework, referred to as the location dependent code allocation (LDCA) framework, for solving the optimal code allocation problem.  We first present the formulation of the optimal location dependent code allocation (LDCA) problem as a standard optimization problem and then propose an effective algorithm to solve for a sub-optimal solution, which is shown to reduce the UBER effectively. 

Suppose we have a set $\mathbb{C}$ of $L$ error correction codes $C_1,C_2,\cdots,C_L$ with the same block length $n$ and different error correction capabilities. In our specific case, we consider BCH codes with a corresponding set of error correction capabilities $\mathbb{T}=\{t_1,t_2,\cdots,t_L\}$.  The memory array is of size $m\times n$ and the crossover probabilities of the cascaded BAC in Fig. \ref{fig:cascaded_channel}, $p_5^{(ij)}$ and $p_6^{(ij)}$, are known based on the channel model. For each wordline $i$, we choose a code from $\mathbb{C}$ and store the encoded bits in it. Let $\bm{C} \in \mathbb{R}^{m\times L}$ be an array whose element $c_{il}$ is the UBER (cost) when code $C_l$ is applied to wordline $i$ and $\bm{c}^T_i$ be a row of $\bm{C}$. Let $\bm{A}\in \mathbb{R}^{L\times m}$ be an array whose column $\bm{a}_i \in \mathbb{R}^L$ is an one-hot vector denoting the code selection for wordline $i$, i.e., for a given $i$, $a_{li}=1$ if $C_l$ is used for wordline $i$ and $a_{li}=0$ otherwise. Let $\bm{r}\in\mathbb{R}^{L}$ represent the rate of the codes in $\mathbb{C}$ and let $R_{goal}$ be the desired storage rate for the memory array. Under the constraint of rate $R_{goal}$, we seek to find a code allocation matrix $\bm{A}$ such that the overall UBER is minimized. The optimal LDCA problem can therefor be formulated as the following integer programming problem:
\begin{equation}
\label{op1}
\begin{split}
\min_{\bm{A}}\quad&\sum_{i=1}^{m}\bm{c}_i^T\bm{a}_i=Tr(\bm{AC})\\
\text{s.t.}\quad&\bm{a}_i^T\bm{1}=1, \quad i=1,\cdots,m\\
&a_{li}\in\{0,1\}\\
&\frac{1}{m}\sum_{i=1}^{m}\bm{r}^T\bm{a}_i\leq R_{goal}
\end{split}
\end{equation}

In the above optimization problem, $\bm{1}$ is the all-ones vector. To solve the above optimization problem, we first need to estimate the cost matrix $\bm{C}$ from the channel parameters. We calculate $\bm{C}$ based on the following approximation:
\begin{equation}
\label{equ:approximation_for_c}
c_{il}\approx1-\sum_{e=0}^{t_l}\binom{n}{e}(\bar{p}_i)^{e}(1-\bar{p}_i)^{n-e},
\end{equation} 
where
\[
\bar{p}_i=\frac{1}{n}\sum_{j=1}^{n}\left[qp_5^{ij}+(1-q)p_6^{ij}\right].
\]
In (\ref{equ:approximation_for_c}), we approximate the $n$ BAC channels of cells on a wordline by an averaged binary symmetric channel (BSC) with parameter $\bar{p}_i$ and calculate the analytical UBER for a BCH code based on this approximation. 

With the approximated cost matrix $\bm{C}$, we propose Algorithm 2 to solve the integer programming problem (\ref{op1}) by solving its relaxed linear programming problem (\ref{op2}) iteratively:
\begin{equation}
\label{op2}
\begin{split}
\min_{\bm{A}}\quad&\sum_{i=1}^{m}\bm{c}_i^T\bm{a}_i=Tr(\bm{AC})\\
\text{s.t.}\quad&\bm{a}_i^T\bm{1}=1, \quad i=1,\cdots,m\\
&0\leq a_{li}\leq 1\\
&\frac{1}{m}\sum_{i=1}^{m}\bm{r}^T\bm{a}_i\leq R_{LP}
\end{split}
\end{equation}

\begin{algorithm}[h!]
	\caption{LDCA solver algorithm}
	\SetAlgoLined
	1. Solve the linear programming problem (\ref{op2}) with $R_{LP}=R_{goal}$ to get $A^*$.\\
	2. Estimate the code allocation matrix $\hat{A}$ based on the maximum in each column of $A^*$. Namely, in $\hat{\bm{a}}_i$, only the $\text{argmax}(\bm{a}^*_i)$-th element is set to $1$ and all other elements are $0$.\\
	3. Calculate the true rate based on $\hat{A}$, i.e., $\hat{R}=\frac{1}{m}\sum_{i=1}^{m}\bm{r}^T\bm{a}_i$.\\
	4. Terminate the algorithm if $\hat{R}$ is within a certain tolerance of $R_{goal}$. Otherwise, update $R_{LP}$. Namely,\\
	\Indp
	\uIf{ $R_{goal}-R_{tol}\leq \hat{R}\leq R_{goal}+R_{tol}$}{
		Algorithm terminates;
	}
	\uElseIf{$\hat{R}>R_{goal}+R_{tol}$}{
		Update $R_{LP}$ with $R_{LP}-R_{update}$;
	}
	\Else{
		Update $R_{LP}$ with $R_{LP}+R_{update}$;
	}
	\Indm
	5. Repeat step 1-4 until termination or certain number of iterations is reached.
\end{algorithm}

In Algorithm 2, $R_{tol}$ and $R_{update}$ are user defined parameter specifying the tolerance between the true rate $\hat{R}$ and the target rate $R_{goal}$, and the step for updating the rate used in the linear programming problem, respectively. Note that in our experiment with $L=3$, Algorithm 2 terminates in a single iteration.

To better illustrate the effectiveness of Algorithm 2, we present rows of the solved code allocation matrix $A^*$ for an array with $n=m=128$ and varying line resistance in Fig. \ref{fig:Algorithm2}. The channel parameters are computed based on our model in Section III and device parameters in Table \ref{table1}. We include $L=3$ codes in $\mathbb{C}$: BCH(128,100,3), BCH(128,93,4) and BCH(128,86,5) and set $R_{goal}$ to be the rate of BCH(128,93,4). Based on the definition of $\bm{A}$, each row of $A$ can be interpreted as the relative ``weight'' of each code. We observe that with the $10\Omega$ line resistance, the non-uniformity is mild and the solution from Algorithm 2 suggests that using a single code BCH(128,93,4) is sufficient. As the line resistance goes up, the solution from Algorithm 2 suggests the usage of a strong code BCH (128,86,5) for the worse wordlines and a weak code BCH(128,100,3) for the better wordlines. 
\begin{figure}[h]
	\centering
	\includegraphics[scale=0.45,clip]{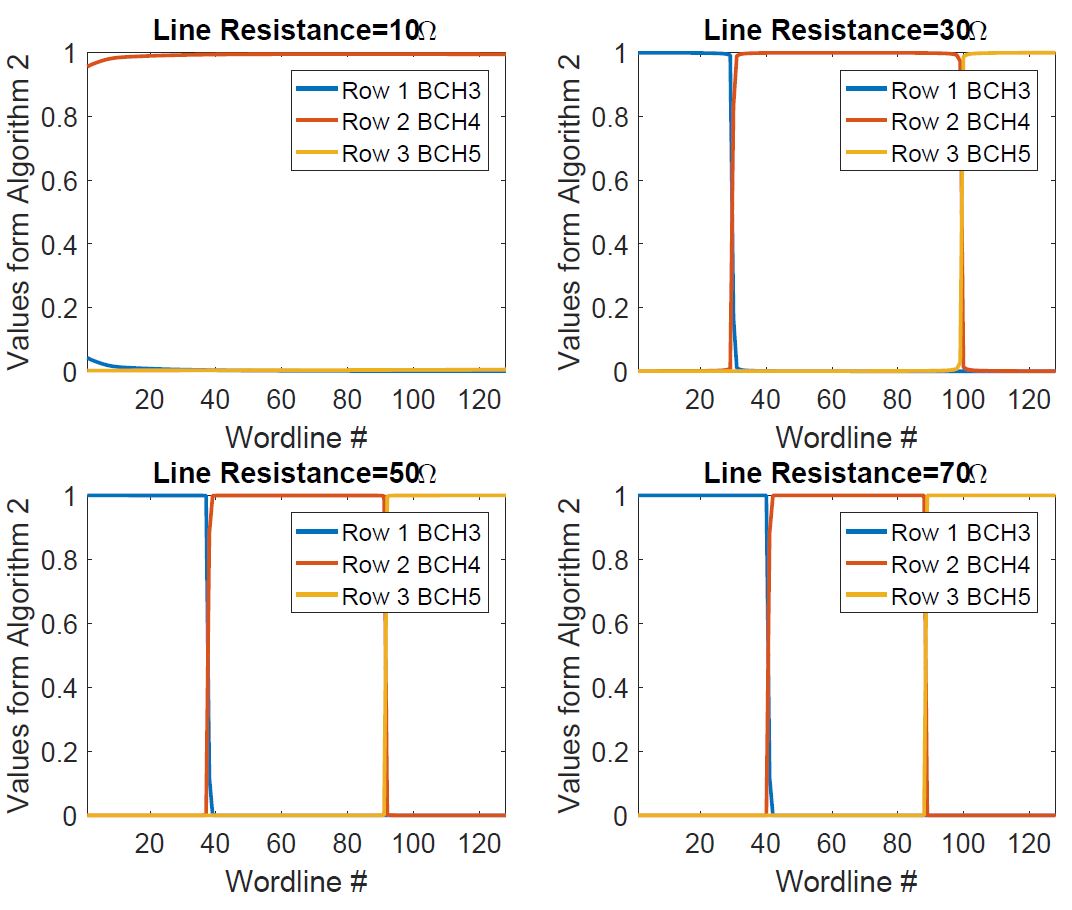}
	\caption{Solutions of Algorithm 2 for Arrays with Various Line Resistance Values.}
	\label{fig:Algorithm2}
\end{figure}

We perform simulations to study the effectiveness of our proposed LDCA framework. In our simulations, two block lengths $n=128$ and $n=256$ are studied, and the corresponding memory arrays are of sizes $128\times128$ and $256\times256$, respectively. Line resistances are varied while the remaining device parameters are from Table \ref{table1}. For each set of the array parameters, four different sets of codes are tested. These sets of codes are the standard BCH codes with $\mathbb{T}=\{1,2,3\}$, $\mathbb{T}=\{2,3,4\}$, $\mathbb{T}=\{3,4,5\},$ and $\mathbb{T}=\{1,2,3,4,5\}$. Their target rates are set to equal the rate of the BCH codes with $t=2,t=3,t=4,$ and $t=3$, respectively. For each group of codes, the optimized code allocation is first solved for by the LDCA solver algorithm, and then the UBER based on the optimized code allocation is simulated. We compare the performance of each group of codes with the performance of the BCH code of the same rate with and without the interleaved coding scheme. The results are shown in Fig. \ref{fig:LDCA_performance}.
\begin{figure}[h]
	\centering
	\begin{subfigure}{0.49\textwidth}
		\centering
		\includegraphics[scale=0.3,clip]{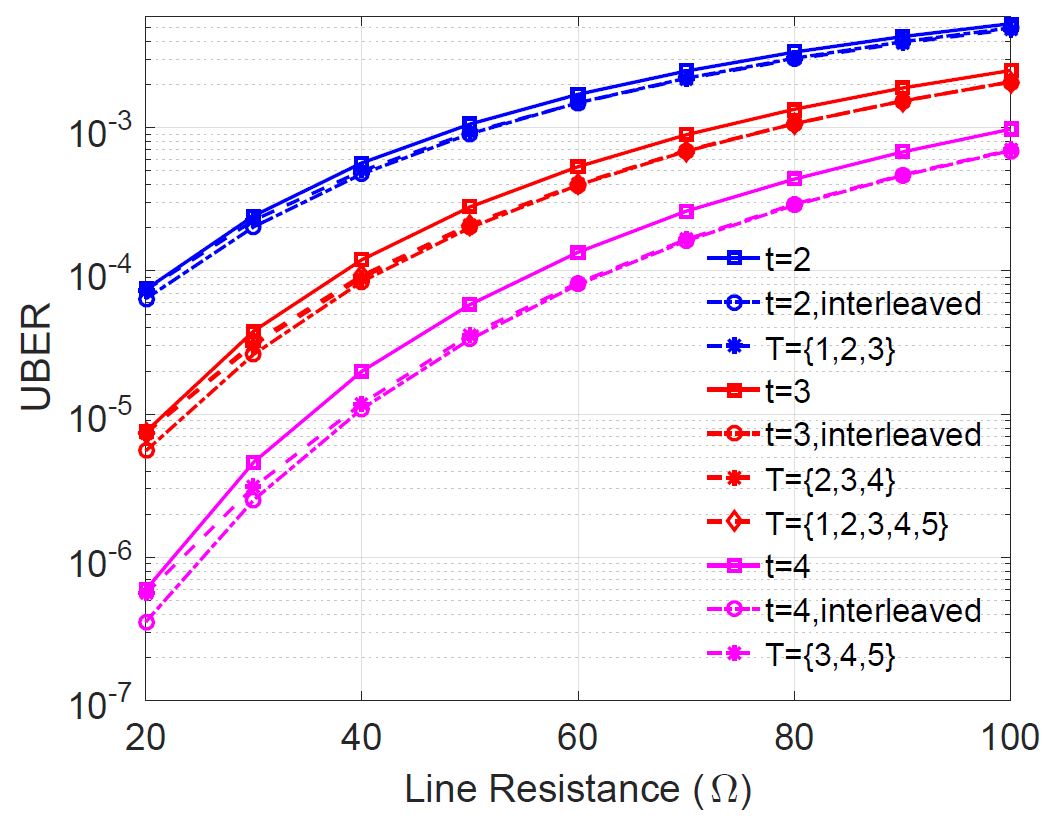}
		\caption{n=128}
		\label{fig:UBER_LDCA_n128}
	\end{subfigure}
	\begin{subfigure}{0.49\textwidth}
		\centering
		\includegraphics[scale=0.3,clip]{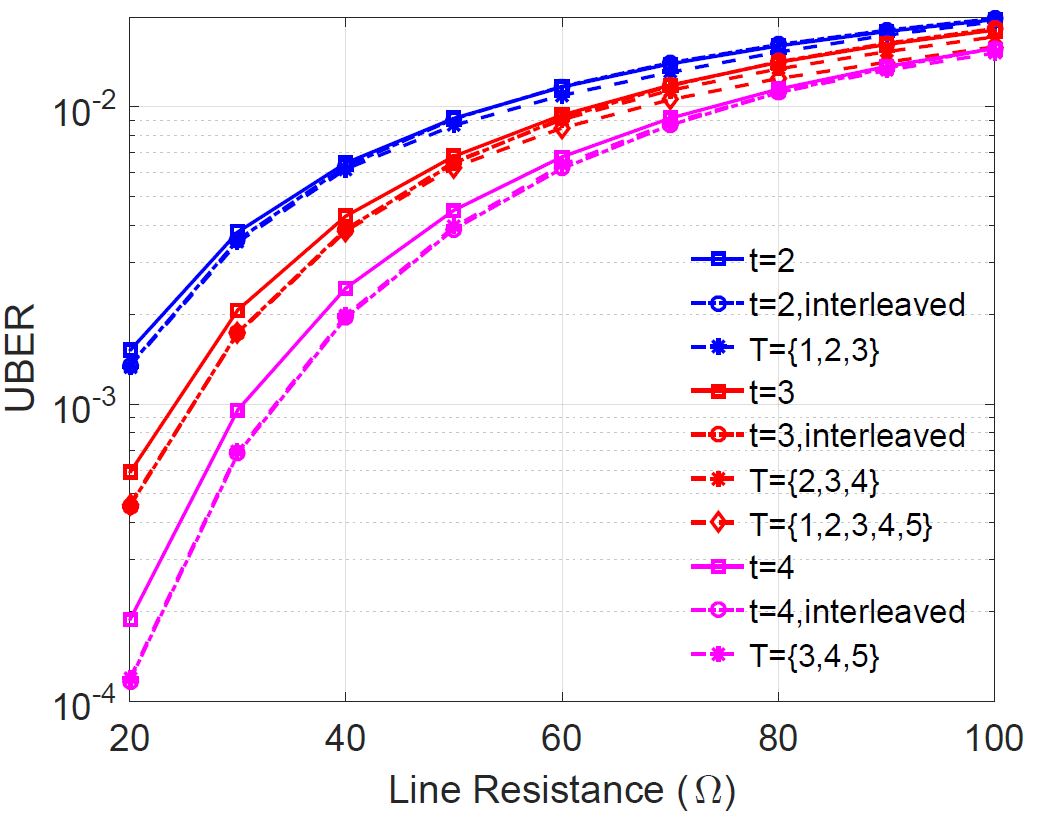}
		\caption{n=256}
		\label{fig:UBER_LDCA_n256}
	\end{subfigure}
	
	\caption{Decoding performance of sets of codes based on LDCA framework}
	\label{fig:LDCA_performance}
\end{figure}

From Fig. \ref{fig:LDCA_performance}, we observe that the scheme utilizing a set of codes based on our LDCA framework shows consistent improvement compared to the scheme utilizing a single code. The UBER of a set of $L=3$ codes is comparable to the UBER of a single code of the same rate with interleaving. In the $128\times128$ array, the set of codes with $\mathbb{T}=\{1,2,3,4,5\}$ does not outperform the set of codes with $\mathbb{T}=\{2,3,4\}$. Meanwhile, in the $256\times256$ array, the set of codes with $\mathbb{T}=\{1,2,3,4,5\}$ outperforms the set of codes with $\mathbb{T}=\{2,3,4\}$ by a consistent $7\%$ in the high line resistance regime. This observation suggests that the utility of allowing for more ECCs to be used in the system depends on the severity of non-uniformity in the memory array. Comparing the proposed schemes in subsection V.A and that in subsection V.B, we observe that the decoding performance of the interleaved coding scheme and the decoding performance of the optimized sets of $L=3$ codes are very similar in most of the tested operating regime. In the regime of short block lengths, i.e., $n=128$, and low line resistance, the interleaved coding scheme outperforms the scheme based on multiple ECCs. We conjecture that this is because interleaving is able to mitigate the non-uniformity despite its severity, whereas the scheme based on multiple ECCs requires notable RBER differences among the wordlines to perform well as the error correction capabilities of ECCs are only expressed in discrete integer values. For larger block lengths, i.e., $n=256$, and high line resistance, the scheme based on the set of $L=5$ codes shows notable improvement compared to the interleaved coding scheme. This shows that when allowing for more ECCs to be used in the LDCA framework, a scheme based on the optimized code allocation of a larger collection of codes is more promising than the interleaved coding scheme when the non-uniformity across the memory array is severe.

Note that as the effectiveness of the scheme utilizing the proposed LDCA framework depends on the code length and severity of non-uniformity, other device parameters from Table \ref{table1} also impact the performance of the proposed scheme. Because the main focus of this paper is on the effect of line resistance, we only present results with varying line resistance and the study on the effect of other device parameters are left for future work. Also note that although our simulations is based on BCH codes and our proposed channel models, the LDCA framework is a general one and it is applicable to other systems with location dependent non-uniformity even when bearing a different channel model and/or utilizing different ECCs. The LDCA framework can be readily generalized to other channels models and ECCs by changing how the cost matrix $\bm{C}$ is calculated in Equation (\ref{equ:approximation_for_c}). 

If one seeks to take into consideration the cost of decoders with different error correction capability or if one seeks to find solutions that use fewer codes without manually restricting $L$, a simple variant of the LDCA framework can be used by adding a regularization term $\bm{c}_{dec}^T\bm{A}\bm{1}$ in the objective functions of the optimization problems where $\bm{c}_{dec}\in\mathbb{R}^{L}$ is a vector that contains appropriate weights for the decoders. To demonstrate the effectiveness of this regularized variant, we present an example by generating plots similar to Fig. \ref{fig:Algorithm2} with and without this regularization term in Fig. \ref{fig:LDCA_reg_results}. For an $256\times 256$ array with $r_w=r_b=30\Omega$, the solution from the LDCA solver algorithm (Fig. \ref{fig:LDCA_unreg}) suggests us to use the BCH(256,248,1) code for the first few wordlines and the BCH(256,216,5) for the last few wordlines. In this moderate line resistance regime, one may wish to avoid using BCH(256,248,1) and BCH(256,216,5) as observation in Fig. \ref{fig:LDCA_performance} suggests that improvement of using more codes is incremental when the non-uniformity is moderate. Instead of manual disallowing those two codes in $\mathbb{C}$, an alternative is to use the regularization term  $\bm{c}_{dec}^T\bm{A}\bm{1}$. With this regularization term, we get a solution that is shown in Fig.\ref{fig:LDCA_reg} and this solution does not use BCH(256,248,1) and BCH(256,216,5). This regularized variant is therefore capable of automatically penalizing the use of more ECCs when the non-uniformity is moderate while still allowing more ECCs to be used when the non-uniformity is severe. 
\begin{figure}[h]
	\centering
	\begin{subfigure}{0.49\textwidth}
		\centering
		\includegraphics[scale=0.3,clip]{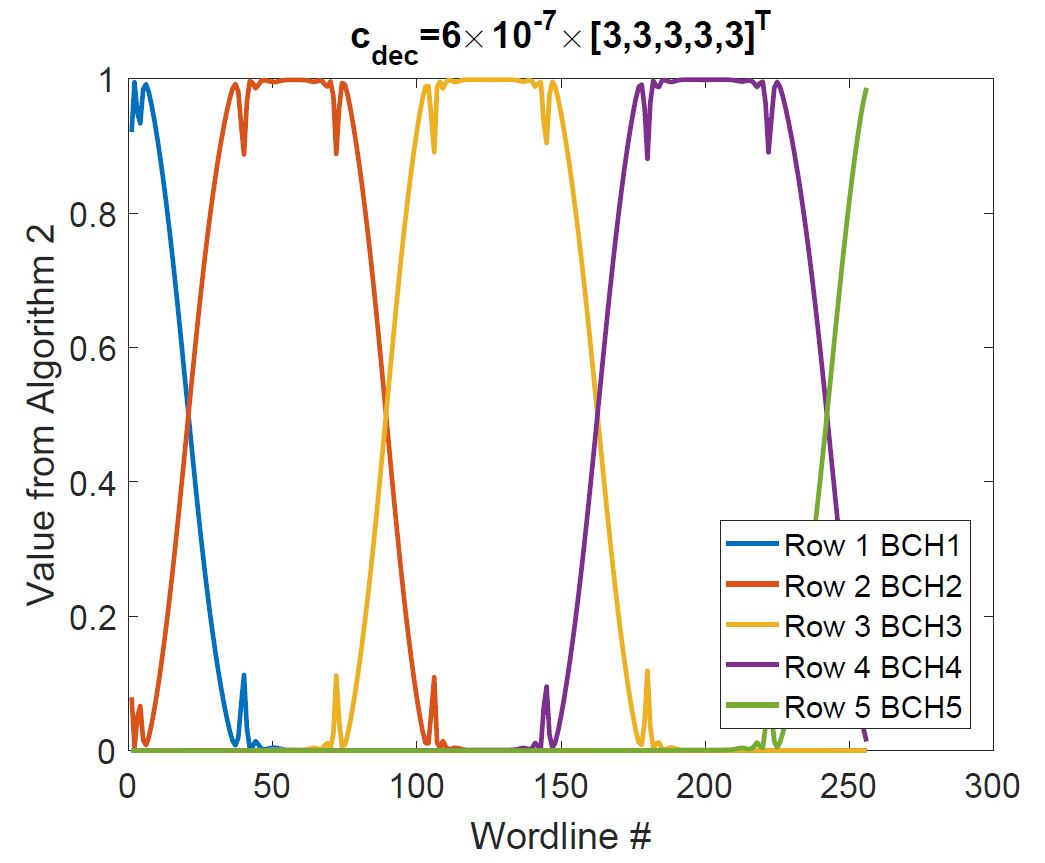}
		\caption{un-regularized solution}
		\label{fig:LDCA_unreg}
	\end{subfigure}
	\begin{subfigure}{0.49\textwidth}
		\centering
		\includegraphics[scale=0.3,clip]{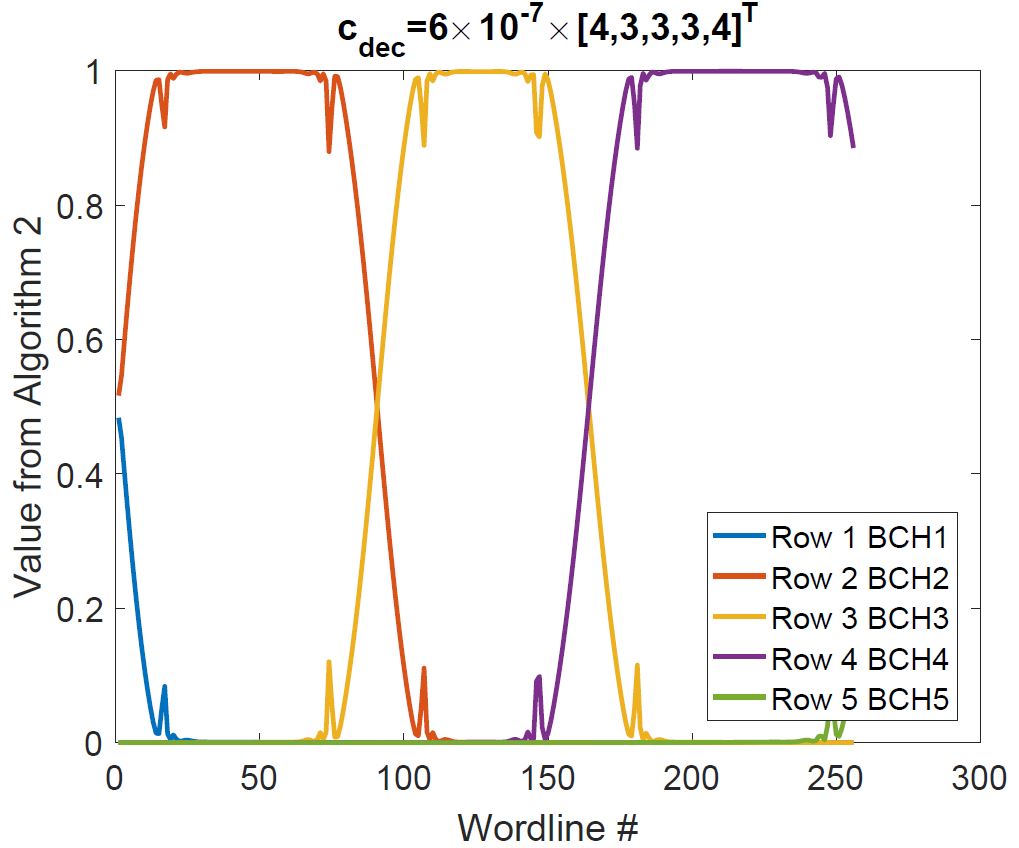}
		\caption{regularized solution}
		\label{fig:LDCA_reg}
	\end{subfigure}
	
	\caption{An example of using the regularization term in the LDCA framework.}
	\label{fig:LDCA_reg_results}
\end{figure}
\section{Conclusion and Future Works}
In this paper, we proposed the write and read channel models for the 1S1R crossbar resistive memory while considering the nondeterministic nature of the memory device. We studied the optimal read threshold which reduces the RBER efficiently. Two schemes, one that utilizes interleaving and one that utilizes multiple codes based on a proposed location dependent code allocation (LDCA) framework, are proposed to improve UBER performance. Future research includes extensive evaluation of the proposed coding scheme under different device parameters.



\begin{thebibliography}{10}
	\providecommand{\url}[1]{#1}
	\csname url@samestyle\endcsname
	\providecommand{\newblock}{\relax}
	\providecommand{\bibinfo}[2]{#2}
	\providecommand{\BIBentrySTDinterwordspacing}{\spaceskip=0pt\relax}
	\providecommand{\BIBentryALTinterwordstretchfactor}{4}
	\providecommand{\BIBentryALTinterwordspacing}{\spaceskip=\fontdimen2\font plus
		\BIBentryALTinterwordstretchfactor\fontdimen3\font minus
		\fontdimen4\font\relax}
	\providecommand{\BIBforeignlanguage}[2]{{%
			\expandafter\ifx\csname l@#1\endcsname\relax
			\typeout{** WARNING: IEEEtran.bst: No hyphenation pattern has been}%
			\typeout{** loaded for the language `#1'. Using the pattern for}%
			\typeout{** the default language instead.}%
			\else
			\language=\csname l@#1\endcsname
			\fi
			#2}}
	\providecommand{\BIBdecl}{\relax}
	\BIBdecl
	
	\bibitem{chen2020}
	Z.~Chen and L.~Dolecek, ``{Write and Read Channel Models for 1S1R Crossbar
		Resistive Memory with High Line Resistance},'' in \emph{Proc. IEEE Global
		Communications Conference (GlobCom)}, Taipei, China, Dec. 2020.
	
	\bibitem{ielmini2015resistive}
	D.~Ielmini and R.~Waser, \emph{{Resistive switching: from fundamentals of
			nanoionic redox processes to memristive device applications}}.\hskip 1em plus
	0.5em minus 0.4em\relax John Wiley and Sons, 2015.
	
	\bibitem{burr2008overview}
	G.~W. Burr, B.~N. Kurdi, J.~C. Scott, C.~H. Lam, K.~Gopalakrishnan, and R.~S.
	Shenoy, ``{Overview of candidate device technologies for storage-class
		memory},'' \emph{IBM Journal of Research and Development}, vol.~52, no. 4.5,
	pp. 449--464, 2008.
	
	\bibitem{liang2013effect}
	J.~Liang, S.~Yeh, S.~S. Wong, and H.-S.~P. Wong, ``{Effect of wordline/bitline
		scaling on the performance, energy consumption, and reliability of
		cross-point memory array},'' \emph{ACM Journal on Emerging Technologies in
		Computing Systems (JETC)}, vol.~9, no.~1, p.~9, 2013.
	
	\bibitem{chen2016design}
	P.-Y. Chen, Z.~Li, and S.~Yu, ``{Design tradeoffs of vertical RRAM-based 3-D
		cross-point array},'' \emph{IEEE Trans. Very Large Scale Integr. (VLSI)
		Syst.}, vol.~24, no.~12, pp. 3460--3467, 2016.
	
	\bibitem{kim2015numerical}
	S.~Kim, H.-D. Kim, and S.-J. Choi, ``{Numerical study of read scheme in
		one-selector one-resistor crossbar array},'' \emph{Solid-State Electronics},
	vol. 114, pp. 80--86, 2015.
	
	\bibitem{chen2013comprehensive}
	A.~Chen, ``{A comprehensive crossbar array model with solutions for line
		resistance and nonlinear device characteristics},'' \emph{IEEE Trans.
		Electron Devices}, vol.~60, no.~4, pp. 1318--1326, 2013.
	
	\bibitem{shin2010data}
	S.~Shin, K.~Kim, and S.-M. Kang, ``{Data-dependent statistical memory model for
		passive array of memristive devices},'' \emph{IEEE Trans. Circuits Syst., II,
		Exp. Briefs}, vol.~57, no.~12, pp. 986--990, 2010.
	
	\bibitem{zorgui2019polar}
	M.~Zorgui, M.~E. Fouda, Z.~Wang, A.~M. Eltawil, and F.~Kurdahi,
	``{Non-Stationary Polar Codes for Resistive Memories},'' in \emph{Proc. IEEE
		Global Communications Conference (GlobCom)}, Big Island, HI, Dec. 2019.
	
	\bibitem{ji2015line}
	B.~Ji, H.~Li, Q.~Ye, S.~Gausepohl, S.~Deora, D.~Veksler, S.~Vivekanand,
	H.~Chong, H.~Stamper, T.~Burroughs \emph{et~al.}, ``{In-line-test of
		variability and bit-error-rate of hfox-based resistive memory},'' in
	\emph{Proc. IEEE International Memory Workshop (IMW)}, Monterey, CA, May
	2015, pp. 1--4.
	
	\bibitem{chen2011variability}
	A.~Chen and M.-R. Lin, ``{Variability of resistive switching memories and its
		impact on crossbar array performance},'' in \emph{Proc. IEEE Rel. Physics
		Symp. (IRPS)}, Monterey, CA, April 2011, pp. MY--7.
	
	\bibitem{medeiros2011lognormal}
	G.~Medeiros-Ribeiro, F.~Perner, R.~Carter, H.~Abdalla, M.~D. Pickett, and R.~S.
	Williams, ``{Lognormal switching times for titanium dioxide bipolar
		memristors: origin and resolution},'' \emph{Nanotechnology}, vol.~22, no.~9,
	p. 095702, 2011.
	
	\bibitem{niu2012low}
	D.~Niu, Y.~Xiao, and Y.~Xie, ``{Low power memristor-based ReRAM design with
		error correcting code},'' in \emph{Proc. IEEE 17th Asia and South Pacific
		Design Automation Conference}, Sydney, Australia, March 2012, pp. 79--84.
	
	\bibitem{jo20143d}
	S.~H. Jo, T.~Kumar, S.~Narayanan, W.~D. Lu, and H.~Nazarian, ``{3D-stackable
		crossbar resistive memory based on field assisted superlinear threshold
		(FAST) selector},'' in \emph{Proc. 2014 IEEE international electron devices
		meeting}, San Francisco, CA, Feb. 2015, pp. 6--7.
	
	\bibitem{esfahanizadeh2018spatially}
	H.~Esfahanizadeh, A.~Hareedy, R.~Wu, R.~Galbraith, and L.~Dolecek,
	``{Spatially-coupled codes for channels with SNR variation},'' \emph{IEEE
		Transactions on Magnetics}, vol.~54, no.~11, pp. 1--5, 2018.
	
	\bibitem{mahdavifar2020polar}
	H.~Mahdavifar, ``{Polar coding for non-stationary channels},'' \emph{IEEE
		Transactions on Information Theory}, vol.~66, no.~11, pp. 6920--6938, 2020.
	
	\bibitem{zhao2013ldpc}
	K.~Zhao, W.~Zhao, H.~Sun, X.~Zhang, N.~Zheng, and T.~Zhang, ``{LDPC-in-SSD:
		Making advanced error correction codes work effectively in solid state
		drives},'' in \emph{11th USENIX Conference on File and Storage Technologies
		(FAST 13)}, 2013, pp. 243--256.
	
	\bibitem{choi2018decoder}
	S.~Choi, H.~K. Ahn, B.~K. Song, J.~P. Kim, S.~H. Kang, and S.-O. Jung, ``{A
		decoder for short BCH codes with high decoding efficiency and low power for
		emerging memories},'' \emph{IEEE Transactions on Very Large Scale Integration
		(VLSI) Systems}, vol.~27, no.~2, pp. 387--397, 2018.
	
	\bibitem{mao2017multilayer}
	M.~Mao, P.-Y. Chen, S.~Yu, and C.~Chakrabarti, ``{A multilayer approach to
		designing energy-efficient and reliable reram cross-point array system},''
	\emph{IEEE Transactions on Very Large Scale Integration (VLSI) Systems},
	vol.~25, no.~5, pp. 1611--1621, 2017.
	
	\bibitem{vucetic1991adaptive}
	B.~Vucetic, ``{An adaptive coding scheme for time-varying channels},''
	\emph{IEEE Transactions on Communications}, vol.~39, no.~5, pp. 653--663,
	1991.
	
\end{thebibliography}
\end{document}